\title{Improved Strong Spatial Mixing for Colorings on Trees}
\author{Charilaos Efthymiou\thanks{Department of Computer Science, University of Warwick, UK. Supported by the Centre of Discrete Mathematics and its Applications (DIMAP), University of Warwick, EPSRC award EP/D063191/1.}
\and 
Andreas Galanis\thanks{Department of Computer Science, University of Oxford, UK. The research leading to these results has received funding from the European Research Council under the European Union's Seventh Framework Programme (FP7/2007-2013) ERC grant agreement no.\ 334828. The paper reflects only the authors' views and not the views of the ERC or the European Commission. The European Union is not liable for any use that may be made of the information contained therein.}
\and
Thomas P. Hayes\thanks{Department of Computer Science, University of New Mexico, USA. Partially supported
by NSF CAREER award CCF-1150281.}
\and
Daniel \v{S}tefankovi\v{c}\thanks{Department of Computer Science, University of Rochester, USA. Research supported in part by NSF grant CCF-1563757.}
\and
Eric Vigoda\thanks{School of Computer Science, Georgia Institute of Technology, USA. Research supported in part by NSF grants CCF-1617306 and CCF-1563838.}
}
\newtheorem{theorem}{Theorem}
\newtheorem{lemma}[theorem]{Lemma}
\newtheorem{observation}[theorem]{Observation}
\newtheorem{definition}[theorem]{Definition}
\def\xb{\mathbf{x}}
\def\zb{\mathbf{z}}
\def\Mb{\mathbf{M}}
\def\ub{\mathbf{u}}
\def\T{\ensuremath{\intercal}}
\def\pib{\boldsymbol{\pi}}
\def\gammab{\boldsymbol{\gamma}}
\newcommand{\integers}{\mathbb{Z}}
\newcommand{\fptas}{\mathsf{FPTAS}}
\newcommand{\Treed}{\mathbb{T}_{d}}
\def\dist{\mathrm{dist}}
\newcommand{\wsm}{\mathsf{WSM}}
\newcommand{\ssm}{\mathsf{SSM}}
\newcommand{\WSM}{\mathsf{WSM}}
\newcommand{\SSM}{\mathsf{SSM}}
\newcommand{\poly}{\mbox{poly}}
\begin{document}

\maketitle

\begin{abstract}
Strong spatial mixing (SSM) is a form of correlation decay that has played an essential role in the design of approximate counting algorithms for spin systems.  A notable example is the algorithm of Weitz (2006) for the hard-core model on weighted independent sets. We study SSM for the $q$-colorings problem on the infinite ($d$+1)-regular tree. Weak spatial mixing (WSM) captures whether the influence of the leaves on the root vanishes as the height of the tree grows.  Jonasson (2002) established WSM when $q>d+1$.  In contrast, in SSM, we first fix a coloring on a subset of internal vertices, and we again ask if the influence of the leaves on the root is vanishing.  It was known that SSM holds on the $(d+1)$-regular tree when $q>\alpha d$ where $\alpha\approx 1.763...$ is a constant that has arisen in a variety of results concerning random colorings. Here we improve on this bound by showing SSM for $q>1.59d$. Our proof establishes an $L^2$ contraction for the BP operator. For the contraction we bound the norm of the BP Jacobian by exploiting  combinatorial properties of the coloring of the tree.
\end{abstract}

\section{Introduction}

Consider random $q$-colorings of the complete tree $T_h$ of height $h$ with branching factor $d$.  
Does the influence of the leaves on the root decay to zero in the limit as the height grows?
If so, this corresponds to weak spatial mixing, which we will define more precisely momentarily.

Now suppose we fix the coloring $\tau$ for a subset of internal vertices. Is it still the case that the influence of the leaves on the root decay to zero as the height grows?   
One might intuitively expect that these internal ``agreements'' defined by $\tau$ only help in the
sense that the influence of the leaves decrease, however
this problem is much more challenging; it corresponds to strong spatial mixing, which is the focus of
this paper.

For statistical physics models, the key algorithmic problems are 
the counting problem of estimating the partition function and 
the problem of sampling from the Gibbs distribution, which corresponds to the
equilibrium state of the system.
Strong spatial mixing ($\ssm$) is a key property of the system for the design of efficient counting/sampling algorithms.

$\SSM$ has a variety of algorithmic implications.  A direct consequence of $\SSM$ on amenable graphs,
such as the integer lattice $\integers^d$, is fast mixing of the Glauber dynamics, which is the
simple Markov chain that updates the spin at a randomly chosen vertex in each step, see, e.g.~\cite{MO1,MO2,Cesi,DSVW,GMP,BS,BCSV}.
$\SSM$ also plays a critical role in the efficiency of correlation-decay techniques of Weitz~\cite{Weitz}
which yields an $\fptas$ for the partition function of the hard-core model in the tree uniqueness region;
this approach has been extended to 2-spin antiferromagnetic models~\cite{LLY} and other 
interesting examples, e.g.,~\cite{LL}; note, the approach of Barvinok~\cite{Barvinok}
utilizing a zero-free region of the partition function in the complex plane has recently been 
extended to the same range of parameters for the hard-core model~\cite{PaR, PR}.

The fundamental question in statistical physics is the uniqueness/non-uniqueness phase transition
which corresponds to whether long-range correlations persist or die off, in the limit as the volume of the system
tends to infinity.    In the uniqueness region the correlations die off, which corresponds to {\em weak spatial mixing} ($\WSM$).
While $\WSM$ (or equivalently uniqueness) is a notoriously challenging problem on 
the 2-dimensional integer lattice $\integers^2$ (e.g., see the recent breakthrough
work of Beffara and Duminil-Copin~\cite{BDC} for the ferromagnetic Potts model), the corresponding $\wsm$ problem
on the infinite $(d+1)$-regular tree $\Treed$, known as the Bethe lattice, is typically simpler since it can be analyzed using recursions due to the absence of cycles  (e.g., see Kelly~\cite{Kelly}
for the hard-core model).   However, for the colorings problem, which is the focus of this paper, even $\wsm$ is far from trivial
on the regular tree~\cite{Jonasson}. In fact, for the closely related antiferromagnetic Potts model the precise range of parameters for $\wsm$ is only known for fixed values of $q,d$~\cite{GGY}.

The focus of this paper is on these correlation decay properties on the infinite $(d+1)$-regular tree~$\Treed$
for the {\em colorings} problem.  We give an informal definition of $\wsm$ and $\ssm$, and refer the 
interested reader to Section~\ref{sec:definitions} for formal definitions.

Let $T_h$ denote the complete tree of height $h$ where all internal vertices have degree $d+1$.
For integer $q\geq 3$, let $\mu_h$ denote the uniform distribution over proper (vertex) $q$-colorings of $T_h$.
Consider a pair of sequences of colorings $(\eta_h)$ and $(\eta'_h)$ for the leaves of $T_h$. 
Let $p_h$ and $p'_h$ denote the marginal probability that the root receives 
a specific color $c$ under $\mu_h$ conditional on the leaves having the fixed coloring $\eta_h$ and $\eta'_h$, respectively.
Roughly, if $\lim_{h\rightarrow\infty} |p_h-p'_h|= 0$ for all sequences $(\eta_h), (\eta'_h)$ and colors $c$,  then we say $\wsm$ holds (see also Section~\ref{sec:definitions}).
Jonasson~\cite{Jonasson} proved that $\wsm$ holds when $q\geq d+2$.  
When $q\leq d+1$, the pair of boundary conditions can actually ``freeze'' the color at the root;  moreover, Brightwell and Winkler~\cite{BW} showed that there are multiple semi-translation invariant Gibbs measures on $\Treed$
when $q\leq d$.  

Now consider an arbitrary coloring $\tau$ for a subset $S\subset\Treed$.  Let $r_h$ 
and $r'_h$ denote the marginal probability that the root receives 
color $c$ under $\mu_h$ conditional on $\eta_h\cup\tau$ and $\eta'_h\cup\tau$, respectively.
If these limits are the same then we say $\ssm$ holds.  The challenge of establishing $\ssm$ is illustrated
by the fact that if $\wsm$ holds then we know that $\lim_{h\rightarrow\infty} p_h = 1/q$ but that
is not necessarily the case in the $\SSM$ setting.

Ge and \v{S}tefankovi\v{c}~\cite{GS} proved that $\ssm$ holds on $\Treed$ when $q>\alpha d$ where $\alpha\approx 1.763...$ is the root of $\frac{1}{\alpha}\exp(1/\alpha)=1$. Gamarnik, Katz, and Misra~\cite{GKM} extended this result to arbitrary triangle-free graphs of maximum degree $d$, under the same condition on $q$. Recent work of Liu, Sinclair, and Srivistava \cite{LSS} builds upon \cite{GKM} together with the approximate counting approach of \cite{Barvinok,PaR} to obtain an $\fptas$ for counting colorings of triangle-free graphs when $q>\alpha d$.
Prior to these works, Goldberg, Martin, and Paterson~\cite{GMP} established the above form of 
$\SSM$ on triangle-free amenable\footnote{Roughly, a graph is amenable if
for every subset $S$ of vertices, the neighborhood satisfies $|N(S)|\leq\poly(|S|)$} graphs, also when $q>\alpha d$.   
In addition to the above results, the threshold $\alpha\approx 1.76\hdots$ has arisen in numerous rapid mixing results, e.g.,~\cite{DF,Hayes,DFHV}.

Our main result presents the first substantial improvement on the $1.76...$ threshold of \cite{GS}, we establish  $\SSM$ on the tree when  $q>1.59 d$. We state a somewhat informal version of our main theorem here, the formal version will be given once we define more precisely $\SSM$, cf. Theorem~\ref{thm:main} below. 
\begin{theorem}[Informal version of Theorem~\ref{thm:main}]\label{thm:maininformal}
There exists an absolute constant $\beta>0$ such that, for all positive integers $q,d$ satisfying $q\geq 1.59d+\beta$, the $q$-coloring model exhibits strong spatial mixing on the regular tree $\mathbb{T}_d$. 
\end{theorem}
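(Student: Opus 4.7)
I would set up the Belief Propagation (BP) recursion on the tree. For an internal vertex $v$ with children $u_1,\ldots,u_d$, let $\mathbf{R}^v=(R^v_1,\ldots,R^v_q)$ be the vector of marginal probabilities that $v$ receives each color under $\mu_h$, conditioned on the leaf boundary condition and the fixed coloring $\tau$ restricted to the subtree rooted at $v$. BP yields
\begin{equation*}
R^v_c=F_c(\mathbf{R}^{u_1},\ldots,\mathbf{R}^{u_d})=\frac{\prod_{i=1}^d (1-R^{u_i}_c)}{\sum_{c'}\prod_{i=1}^d (1-R^{u_i}_{c'})},
\end{equation*}
where a child $u_i$ pre-colored by $\tau$ contributes a point-mass message. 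This reduces control of the marginal at the root to studying iterates of $F$ along the tree.

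\textbf{$L^2$ contraction via the Jacobian.} To prove $\SSM$, I would show that differences in the leaf boundary condition contract exponentially as they propagate up the tree. The key object is the Jacobian $J=DF$ of the BP operator, viewed as a linear map between tangent spaces to the probability simplex. The goal is to establish a bound of the form $d\cdot \|J\|^2_{2\to 2}\leq \rho<1$, uniformly over all BP message profiles arising from valid colorings, whenever $q\geq 1.59d+\beta$. Given such a bound, writing $\Delta_v$ for the difference between BP messages at $v$ arising from two boundary conditions $\eta_h\cup\tau$ and $\eta'_h\cup\tau$, and iterating from the leaves to the root, one obtains $\|\Delta_{\mathrm{root}}\|_2^2\leq \rho^h\cdot O(1)$. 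Converting this $L^2$ bound to total variation distance yields $\SSM$.

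\textbf{Exploiting combinatorial structure.} The naive row-sum ($L^\infty$) estimate on $J$ only recovers the classical $\alpha\approx 1.763$ threshold of \cite{GS,GKM}. To push below $1.76$, one must exploit that the messages $\mathbf{R}^{u_i}$ arising as valid BP iterates are constrained: each coordinate $R^{u_i}_c$ satisfies inequalities coming from the proper coloring constraint on the subtree, and these constraints are correlated across siblings. I would analyze $J^\T J$ and bound its top eigenvalue by decomposing contributions according to which colors are ``heavily forbidden'' by the children's subtrees, controlling the cross-terms using combinatorial identities that limit how many children can simultaneously bias their messages toward the same color. The $1.59$ threshold should emerge as the crossover point at which this sharper eigenvalue estimate falls below $1/d$.

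\textbf{Main obstacle.} The crux of the proof is establishing the Jacobian bound with the improved constant along \emph{all} admissible BP trajectories, not just at a single fixed point. Unlike $\WSM$, where one may linearize at the uniform fixed point, $\SSM$ requires handling messages that can be significantly non-uniform because the internal coloring $\tau$ biases marginals throughout the tree. Quantifying the combinatorial constraints on valid messages precisely enough to beat the $1.76$ threshold, while retaining a bound that is uniform over all admissible messages, all choices of $\tau$, and all heights $h$, is where the bulk of the technical work must lie.
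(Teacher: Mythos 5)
Your overall framework---expressing the root marginal via the BP recursion, viewing $\ssm$ as an $L^2$ contraction of the BP Jacobian, and then improving the contraction constant below $1.76$ by exploiting combinatorial structure of the conditioning---exactly matches the paper's strategy. However, the specific combinatorial mechanism you sketch is not the one the paper uses, and the difference matters.

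You propose to decompose the Jacobian by which colors are ``heavily forbidden by the children's subtrees,'' i.e.\ to exploit constraints on the \emph{values} of the incoming messages $\mathbf{R}^{u_i}$. The paper instead exploits the discrete structure of the conditioned set $\Lambda$ near the root: because $\eta$ and $\eta'$ agree within distance $2$ of the root, any child $v_i\in\Lambda$ (``frozen'') sends identical messages under both boundaries, and for each non-frozen child, any color $c$ appearing on a $\Lambda$-neighbor of $v_i$ (``blocked'') forces $\pi_{i,c}=\pi_{i,c}'=0$ \emph{exactly}. Those Jacobian entries can therefore be discarded outright rather than merely bounded. This is recorded by the vector $\gammab$, where $\gamma_c$ is the fraction of non-frozen children for which color $c$ is available, and the central linear-algebraic lemma (Lemma~\ref{lem:mainlemma2} applied with $\hat{\gammab}=\sqrt{\gammab}$) gives
\[
\big\| \big(\mathrm{diag}(\hat{\pib})-\hat{\pib}\hat{\pib}^{\T}\big)\mathrm{diag}(\sqrt{\gammab})\big\|_2 \;\leq\; \tfrac12 \max_{j}\hat\pi_j\big(1+\gamma_j\big).
\]
This captures a trade-off your message-based decomposition would not easily reach: if $\gamma_k$ is small (color $k$ blocked at most children), the factor $(1+\gamma_k)$ is near $1$, which precisely compensates for the fact that $\hat\pi_k$ is then \emph{large} (blocking $k$ at the children biases the root towards $k$). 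Lemma~\ref{lem:mainlemma3} then closes the loop by showing $\tfrac12\hat\pi_k(t)(1+\gamma_k)<K'/|D|$ uniformly, using Lemma~\ref{lemma:LowerBound4Marg}'s lower bound on the marginal of an available color; the constant $1.59$ emerges as (just above) the root of
\[
\frac{1}{\alpha'}\exp\Big(\frac{1}{\alpha'}\Big)\exp\Big(-\frac{1}{\alpha'-1+\exp\big(\frac{1}{\alpha'-1}\big)}\Big)=1.
\]
You correctly flag as the ``main obstacle'' that the Jacobian bound must hold uniformly over all admissible message profiles, but your sketch does not supply the mechanism that resolves it; without tracking the blocked/available structure of $\Lambda$ (rather than just the message values), the argument would stall at the $1.76$ threshold.
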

We remark that the constant $1.59$ in Theorem~\ref{thm:maininformal} can be replaced with any $\alpha'>1$ satisfying \[\frac{1}{\alpha'}\exp\Big(\frac{1}{\alpha'}\Big)\exp\Big(-\frac{1}{\alpha' - 1 +\exp\big(\frac{1}{\alpha'-1}\big)}\Big)<1,\] 
the smallest such value up to four decimal digits is $1.5897$.

We give an overview of our proof approach in Section~\ref{sec:proofapproach} after formally defining $\SSM$ in
Section~\ref{sec:definitions} and stating the formal version of Theorem~\ref{thm:maininformal}.  We then present detailed proofs of the three main lemmas in Section~\ref{sec:proofapproach}.

\section{Definitions}
\label{sec:definitions}

Let $q\geq 3$ be an integer and $G=(V,E)$ be a graph. A proper $q$-coloring of $G$ is an assignment $\sigma:V\rightarrow [q]$ such that for every $(u,v)\in E$ it holds that $\sigma(u)\neq \sigma(v)$.  We use $\Omega_G$ to denote  the set of all proper $q$-colorings of $G$  and $\mu_G$ to denote the uniform probability distribution on $\Omega_G$ (provided that $\Omega_G$ is non-empty). 

For $\sigma\in \Omega_G$ and a set $\Lambda\subset V$, we use $\sigma_\Lambda$ to denote the restriction of $\sigma$ to $\Lambda$. When $\Lambda$ consists of a single vertex $v$, we will often use the shorthand $\sigma_v$ to denote the color of $v$ under $\sigma$. We say that an assignment $\eta:\Lambda \rightarrow [q]$ is \emph{extendible} if there exists a coloring $\sigma\in \Omega_G$ such that $\sigma_\Lambda=\eta$. 

We can now formally define $\SSM$.
\begin{definition}\label{def:SSM}
Let $\zeta: \mathbb{Z}_{\geq 0}\rightarrow [0,1]$ be a real-valued function on the positive integers.  

The $q$-coloring model exhibits \emph{strong spatial mixing}, denoted $\ssm$, on a finite graph $G=(V,E)$ with decay rate $\zeta(\cdot)$ iff for every $v\in V$, for every $\Lambda\subset V$, for any two extendible assignments $\eta,\eta':\Lambda \rightarrow [q]$ and any color $c\in [q]$ it holds that
\begin{equation}
\label{eq:SSM}
\big| \mu_G(\sigma_v=c \mid \sigma_\Lambda=\eta)-\mu_G(\sigma_v=c \mid \sigma_\Lambda=\eta')\big|\leq \zeta\big(\mathrm{dist}(v,\Delta)\big),
\end{equation}
where $\Delta\subseteq \Lambda$ denotes the set of vertices where $\eta$ and $\eta'$ disagree.

In the case where $G$ is infinite, we say that the $q$-coloring model exhibits strong spatial mixing on $G$ with decay rate $\zeta(\cdot)$ if it exhibits strong spatial mixing on every finite subgraph of $G$ with decay rate $\zeta(\cdot)$.
\end{definition}

The definition of weak spatial mixing has one modification: in the RHS of \eqref{eq:SSM} we replace $\dist(v,\Delta)$ by the weaker condition $\dist(v,\Lambda)$.  $\wsm$ says that the influence of a pair of boundary conditions decays at rate $\zeta(\cdot)$ in the
distance to the boundary $\Lambda$.  In $\SSM$ the pair of boundaries $\eta,\eta'$ might only differ on a subset $\Delta\subset\Lambda$; do these fixed ``agreements'' on $\Lambda\setminus\Delta$ influence the marginal at $v$?
If $\SSM$ holds then the difference in the marginal at $v$ decays at rate $\zeta(\cdot)$ in the distance to the 
``disagreements'' in $\eta,\eta'$.

With these definitions in place, we are now ready to give the formal version of Theorem~\ref{thm:maininformal}.
\begin{theorem}\label{thm:main}
There exists an absolute constant $\beta>0$ such that, for all positive integers $q,d$ satisfying $q\geq 1.59d+\beta$, the $q$-coloring model exhibits strong spatial mixing on the regular tree $\mathbb{T}_d$ with exponentially decaying rate. 

That is, there exist constants $\alpha,C>0$ and  a function $\zeta$ satisfying $\zeta(\ell)\leq C \exp(-\alpha \ell)$ for all integers $\ell\geq 0$ such that for all finite subtrees $T$ of $\mathbb{T}_d$ the $q$-coloring model exhibits strong spatial mixing on $T$ with decay rate $\zeta$.
\end{theorem}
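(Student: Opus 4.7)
The plan is to prove Theorem~\ref{thm:main} via an iterated contraction of the belief propagation (BP) recursion for root marginals on subtrees of $\Treed$, measured in $L^2$ rather than in the total-variation norm that yields the classical $\alpha\approx 1.763$ threshold of~\cite{GS,GKM}. First I would set up the recursion. For a rooted subtree $T_v$, a boundary set $\Lambda$, and an extendible assignment $\eta$ on $\Lambda$, write $R^{T_v,\eta}_c=\mu_{T_v}(\sigma_v=c\mid\sigma_\Lambda=\eta)$; the vector $R^{T_v,\eta}$ is determined from the analogous vectors at the children $v_1,\ldots,v_d$ of $v$ by an explicit rational map $F$. Given two extendible boundary conditions $\eta,\eta'$ disagreeing on $\Delta\subseteq\Lambda$, set $\xb_v=R^{T_v,\eta}-R^{T_v,\eta'}$; the mean value theorem then gives
\[\xb_v=J_v\cdot(\xb_{v_1},\ldots,\xb_{v_d}),\]
where $J_v$ is the Jacobian of $F$ evaluated at an intermediate tuple on the segment between the two boundary-induced children-tuples. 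To obtain $\SSM$ with exponential decay it suffices to show that some norm of $\xb_v$ contracts geometrically in $\dist(v,\Delta)$.

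Second I would work in $L^2$ and aim for a contraction of the form $\|\xb_v\|_2\le \lambda \big(\sum_{i=1}^{d}\|\xb_{v_i}\|_2^2\big)^{1/2}$ with $\lambda^2 d<1$, possibly after iterating over two BP levels to collect enough slack. This reduces to a spectral bound on $J_v$ (or on the two-step product $J_vJ_{v_i}$), which is a combinatorial eigenvalue problem: the rows of $J_v$ are indexed by colors admissible at $v$, the columns by (child, admissible-color-at-child) pairs, and the entries scale as $1/(q-d_v^{\mathrm{free}})$, where $d_v^{\mathrm{free}}$ is the number of children of $v$ not prescribed by the internal fixing. The advantage of $L^2$ over $L^\infty$ is that a concentration of many children on a common favored color, or equivalently a small free palette at $v$, collapses the mass of $J_v$ onto a low-dimensional subspace and shrinks its spectral norm, whereas the $\ell^\infty$-to-$\ell^\infty$ norm remains pinned at the worst-case row sum — which is precisely what forces the $\alpha\approx 1.763$ barrier in~\cite{GS,GKM}.

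Third I would extract the quantitative factor
\[\frac{1}{\alpha'}\exp\!\Big(\frac{1}{\alpha'}\Big)\exp\!\Big(-\frac{1}{\alpha'-1+\exp(1/(\alpha'-1))}\Big)<1\]
that controls the $1.59d$ threshold. The leading $\tfrac{1}{\alpha'}\exp(1/\alpha')$ matches the single-step $L^\infty$ bound, corresponding to the worst case where a $1/\alpha'$ fraction of the children push a common color. The exponential correction factor should emerge from a two-level analysis: at each grandchild the palette is already pruned by its parent's color, so a definite proportion $\approx 1/(\alpha'-1+\exp(1/(\alpha'-1)))$ of colors is guaranteed to be unavailable, uniformly over the internal fixing; converting this palette shrinkage into a spectral improvement for the two-step Jacobian is what produces the savings over~\cite{GS,GKM}.

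The main obstacle, I expect, is this last combinatorial step: one needs a sufficiently sharp description of the joint distribution of admissible colors across neighbourhoods of the tree, uniform in the internal fixing $\tau$, for the $L^2$ bound on the two-step Jacobian to deliver \emph{exponential} decay rather than merely a vanishing sequence. A related subtlety is that $J_v$ is evaluated not at a BP fixed point but at arbitrary convex combinations of two feasible marginal vectors; maintaining a uniform entrywise upper bound on all such intermediate vectors — in the spirit of the flow-type conditions used in~\cite{GKM,GMP} — and propagating it through the inductive contraction is what should close the argument.
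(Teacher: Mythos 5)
Your high-level plan --- proving $\SSM$ via an $L^2$ contraction of the BP Jacobian, with the idea that $L^2$ captures structural information about the boundary that the $L^\infty$ analysis of~\cite{GS,GKM} cannot --- matches the paper's strategy, and your intuition that concentration of the children's behaviour collapses the Jacobian onto a low-dimensional subspace is pointed in the right direction. But the concrete mechanism you propose for extracting the $1.59$ threshold is not the paper's, and the step you yourself flag as ``the main obstacle'' is exactly where the real work lies; as written, your sketch leaves it open.

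Specifically: the paper establishes a \emph{single-step} contraction (Theorem~\ref{thm:main2}), $\|\pib-\pib'\|_2^2\le U\max_i\|\pib_i-\pib'_i\|_2^2$ under the hypothesis $\dist(\rho,\Delta)\ge 3$, not a bound on a two-step Jacobian product $J_vJ_{v_i}$. The slack over $\alpha\approx 1.763$ does \emph{not} come from a guaranteed fraction of colors being pruned at the grandchildren by ``the parent's color'' (the parent is generically not frozen, so its color is not fixed). Rather, because the disagreements lie at distance $\ge 3$ from the root, the set of \emph{blocked} colors at each non-frozen child is identical under $\eta$ and $\eta'$ (Observation~\ref{obs:sameblocked}). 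This makes it legal to introduce, for each color $c$, the fraction $\gamma_c$ of non-frozen children with $c$ available (Definition~\ref{def:gammas}), and to bound the $L^2$ norm of the $\gammab$-weighted Jacobian $\Mb_{\hat{\pib},\sqrt{\gammab}}=(\mathrm{diag}(\hat{\pib})-\hat{\pib}\hat{\pib}^{\T})\mathrm{diag}(\sqrt{\gammab})$ by $\tfrac12\max_j\hat{\pi}_j(1+\gamma_j)$ (Lemma~\ref{lem:mainlemma2}), a genuinely new spectral estimate absent from your sketch. The trade-off that closes the argument (Lemma~\ref{lem:mainlemma3}) is: if $\gamma_c$ is small, direction $c$ barely contributes to the gradient; if $\gamma_c$ is large, every child with $c$ available places at least $L$ mass on $c$ (Lemma~\ref{lemma:LowerBound4Marg}), so $\hat{\pi}_c$ is forced small via $\prod_i(1-z_{i,c})\le\exp(-L|D|\gamma_c)$. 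The correction factor $\exp\bigl(-1/(\alpha'-1+e^{1/(\alpha'-1)})\bigr)$ in the threshold is precisely $\exp(-dL)$, arising from this child-marginal lower bound, not from a uniform proportion of unavailable colors at grandchildren as your third paragraph posits. So your proposal reconstructs the framework and correctly identifies the missing combinatorial ingredient as the crux, but it misattributes the source of the quantitative gain and does not supply the $\gammab$-bookkeeping, the weighted-Jacobian norm bound, or the marginal lower bound that the paper uses to deliver it.
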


\section{Proof Approach}\label{sec:proofapproach}

For a set $\Lambda\subset V$ and an extendible assignment $\eta:\Lambda\rightarrow [q]$, we use $\boldsymbol{\pi}_{G, v,\eta}$ to denote the $q$-dimensional probability vector whose entries give the marginal distribution of colors at $v$ under the boundary condition $\eta$, i.e., for a color $c\in [q]$, the $c$-th entry of  $\pib_{G,v,\eta}$ is given by $\mu_G(\sigma_v=c\mid \sigma_\Lambda=\eta)$. 

The key ingredient to prove Theorem~\ref{thm:main} is the following. 
\begin{theorem}\label{thm:main2}
There exist absolute constants  $\beta>0$ and $U\in(0,1)$ such that the following holds for all positive integers $q,d$ satisfying $q\geq 1.59d+\beta$.  

Let $T=\hat{\mathbb{T}}_{d,h, \rho}$ be the $d$-ary tree with height $h$ rooted at $\rho$, $\Lambda$ be a subset of the vertices of $T$, and $\eta,\eta':\Lambda\rightarrow [q]$ be two extendible assignments of $T$ with $\dist(\rho,\Delta)\geq 3$ where $\Delta\subseteq \Lambda$ is the set of vertices where $\eta$ and $\eta'$ disagree. Let $v_1,\hdots, v_{d}$ be the children of $\rho$ and for $i\in [d]$ let $T_i=(V_i,E_i)$ be the subtree of $T$ rooted at $v_i$ which consists of all descendants of $v_i$ in $T$. Then
\begin{equation*}
\big\| \pib-\pib'\big\|_2^2\leq U\max_{i\in [d]}\big\| \pib_{i}-\pib_{i'}\big\|_2^2,
\end{equation*} 
where  $\pib=\pib_{T,\rho, \eta}$, $\pib'=\pib_{T, \rho, \eta'}$ and for $i\in [d]$ we denote $\pib_i=\pib_{T_i,v_i, \eta(\Lambda\cap V_i)}$, $\pib_i'=\pib_{T_i, v_i, \eta'(\Lambda\cap V_i)}$.
\end{theorem}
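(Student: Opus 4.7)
The plan is to prove the $L^2$ contraction by linearizing the one-step BP recursion at the root and bounding the operator norm of its Jacobian, using combinatorial control on the child and grandchild marginals afforded by the condition $\dist(\rho,\Delta)\geq 3$. By the Markov property of $\mu_T$ conditioned on $\eta$, the $d$ subtrees decouple and the root marginal is
\[
\pi(c) = \frac{\prod_{i=1}^d(1 - p_i(c))}{\sum_{c' \in [q]}\prod_{i=1}^d(1 - p_i(c'))},
\]
where $p_i(c)$ is the $c$-th entry of $\pib_i$. Denoting this BP map by $F$, so that $\pib = F(\pib_1,\dots,\pib_d)$ and $\pib'=F(\pib_1',\dots,\pib_d')$, a direct computation gives the partial Jacobians
\[
(J_i)_{c,a} = \frac{\pi(c)\bigl(\pi(a) - \delta_{c,a}\bigr)}{1 - p_i(a)}.
\]
Interpolating linearly between the two inputs and applying the fundamental theorem of calculus reduces the theorem to a pointwise bound of the form $\|\sum_{i=1}^d J_i\delta_i\|_2^2 \leq U \max_i \|\delta_i\|_2^2$, where $\delta_i := \pib_i - \pib_i'$ satisfies $\sum_a \delta_i(a)=0$.

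Setting $g_i(c) := \delta_i(c)/(1-p_i(c))$, one computes $(J_i\delta_i)(c) = -\pi(c)\bigl(g_i(c) - \mathbb{E}_\pi[g_i]\bigr)$. Applying Cauchy--Schwarz first in the index $i$ and then against the measure $\pi$, and bounding $\mathrm{Var}_\pi(g_i) \leq \mathbb{E}_\pi[g_i^2]$, one obtains
\[
\Bigl\|\sum_{i=1}^d J_i \delta_i\Bigr\|_2^2 \leq \frac{d^2\,\pi_{\max}^2}{(1-p_{\max})^2}\max_i \|\delta_i\|_2^2,
\]
with $\pi_{\max} := \max_c \pi(c)$ and $p_{\max} := \max_{i,c} p_i(c)$. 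Thus the contraction holds with $U = d^2 \pi_{\max}^2/(1-p_{\max})^2 < 1$ provided $\pi_{\max}$ and $p_{\max}$ are suitably small, reducing the theorem to sharp bounds on these two maxima.

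To obtain such bounds, I would exploit the distance-$\geq 3$ condition: both $\pi$ and each $\pib_i$ are themselves outputs of a one-step BP whose inputs (the grandchild marginals $\pib_{i,j}$) satisfy an analogous one-level BP bound coming from deeper in the tree. Using the simplex identity $\sum_c(1-p_{i,j}(c)) = q-1$ and a convexity/symmetrization argument on the adversarial choice of $\{\pib_{i,j}\}$, the worst-case grandchild configuration places mass $P$ on $\lfloor 1/P\rfloor$ colors; plugging this back into the BP formula for $p_i$ and $\pi$ produces a \emph{two-level} bound of the schematic form
\[
\pi_{\max},\,p_{\max} \;\lesssim\; \frac{1}{q}\exp\!\Bigl(\tfrac{1}{\alpha'}\Bigr)\exp\!\Bigl(-\tfrac{1}{\alpha'-1+\exp(1/(\alpha'-1))}\Bigr).
\]
Inserting this into the Jacobian estimate yields the criterion $\tfrac{1}{\alpha'}\exp(1/\alpha')\exp(-1/(\alpha'-1+\exp(1/(\alpha'-1)))) < 1$ stated in the paper, whose smallest root is $\alpha' \approx 1.5897$ and gives the threshold $q \geq 1.59d + \beta$.

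The main obstacle is this two-level combinatorial optimization. A one-level version would recover only the classical Ge--\v{S}tef\'ankovi\v{c} threshold $\alpha \approx 1.763$ (from $\tfrac{1}{\alpha}\exp(1/\alpha) < 1$), so the improvement hinges on identifying the worst-case adversary \emph{under the inner constraint} on $\pib_{i,j}$, i.e., showing that no single configuration simultaneously saturates the Jacobian-norm bound at $\rho$ and the marginal bound at the $v_i$. The delicate points are the symmetrization across grandchildren (so one may reduce to a single symmetric adversary), the handling of integer constraints like $\lfloor 1/P\rfloor$, and the control of $O(1/d)$ correction terms that must be absorbed into the additive constant $\beta$.
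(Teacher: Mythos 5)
Your high-level skeleton (linearize the BP map, interpolate, bound the Jacobian's $L^2$ norm) matches the paper's strategy, and your explicit formula for $J_i$ and the identity $(J_i\delta_i)(c) = -\pi(c)\bigl(g_i(c)-\mathbb{E}_\pi[g_i]\bigr)$ are correct. However, the step where you reduce to a \emph{uniform} bound $U = d^2\pi_{\max}^2/(1-p_{\max})^2$ and then propose to push $\pi_{\max}$ below $\tfrac{1}{q}\exp(1/\alpha')\exp\bigl(-\tfrac{1}{\alpha'-1+\exp(1/(\alpha'-1))}\bigr)$ by a grandchild argument contains a genuine gap: no such uniform bound on $\pi_{\max}$ holds. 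The suppression factor $\exp(-dL)$ comes from the numerator $\prod_{i\in D}(1-z_{i,k}(t))$, and this product is small only when color $k$ is actually \emph{available} at the non-frozen children (so that each $z_{i,k}(t)\geq L$). The condition $\dist(\rho,\Delta)\geq 3$ allows $\Lambda$ to contain vertices at distance $2$ from $\rho$, so a color $k$ can be blocked at every child $v_i$ via a grandchild in $\Lambda$. For such a color the numerator is $1$ and $\hat\pi_k(t)\approx \frac{1}{q\exp(-d/q)}=\frac{\exp(1/r)}{q}$ with no grandchild suppression, whence $d\,\pi_{\max}\approx\exp(1/r)/r\approx 1.18>1$ at $r=1.59$. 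Your one-parameter bound therefore cannot contract below the Ge--\v{S}tefankovi\v{c} threshold $\alpha\approx 1.763$ no matter how the ``two-level'' optimization is carried out; the estimate $\pi_{\max}\lesssim\frac{1}{q}\exp(1/\alpha')\exp(-\cdots)$ you posit is simply false for blocked colors.

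The paper's actual mechanism is a per-color refinement rather than a uniform one. It introduces $\gamma_k\in[0,1]$, the fraction of non-frozen children at which color $k$ is available, and works with the reweighted matrix $\Mb_{\hat{\pib},\sqrt{\gammab}}=(\mathrm{diag}(\hat{\pib})-\hat{\pib}\hat{\pib}^\T)\mathrm{diag}(\sqrt{\gammab})$. Lemma~\ref{lem:mainlemma2} is the crux: a careful spectral argument (not just $\pi(c)^2\leq\pi_{\max}\pi(c)$ plus Cauchy--Schwarz) gives $\|\Mb_{\hat{\pib},\sqrt{\gammab}}\|_2\leq\tfrac{1}{2}\max_k\hat\pi_k(1+\gamma_k)$, which buys a factor close to $\tfrac12$ precisely for colors with small $\gamma_k$ --- exactly the colors for which $\hat\pi_k$ has no grandchild suppression. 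Lemma~\ref{lem:mainlemma3} then shows the product $\hat\pi_k(1+\gamma_k)$ is maximized at $\gamma_k=1$ (because $h(x)=(1+x)\exp(-dLx)$ is increasing when $dL<1/2$), where the $\exp(-dL)$ suppression kicks in and gives the $1.59$ threshold. To repair your argument you would need to replace the uniform factors $\pi_{\max}$ in your Cauchy--Schwarz chain by a color-dependent weight that is small for blocked colors; doing that carefully essentially forces you to rederive Lemmas~\ref{lem:mainlemma1} and~\ref{lem:mainlemma2}.
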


Intuitively, Theorem~\ref{thm:main2} says that  disagreements between $\eta$ and $\eta'$ have smaller impact on the marginals as we move upwards on the tree. More precisely, the marginals of the root under $\eta$ and under $\eta'$ are closer in $L^2$ distance than the distance between the marginals of any child (under the induced distributions on the subtrees hanging from them). 

Using Theorem~\ref{thm:main2}, the proof of Theorem~\ref{thm:main} of strong spatial mixing follows from rather standard considerations, the proof can be found in Section~\ref{sec:proofofmaintheorem}. In the following section, we focus on the more interesting proof of  Theorem~\ref{thm:main2} and explain the new aspects of our analysis.

\subsection{The three main lemmas}
In this section, we lay down the main technical steps in proving Theorem~\ref{thm:main2}. In particular, we will assume throughout that, for appropriate integers $q,d,h$, $T=\hat{\mathbb{T}}_{d,h, \rho}$ is the $d$-ary tree with height $h$ rooted at $\rho$, $\Lambda$ is a subset of the vertices of $T$, and $\eta,\eta':\Lambda\rightarrow [q]$ are two extendible assignments of $T$ with $\mathrm{dist}(\rho,\Delta)\geq 3$ where $\Delta\subseteq \Lambda$ is the set of vertices where $\eta$ and $\eta'$ disagree.   We als let $v_1,\hdots, v_{d}$ be the children of $\rho$ and for $i\in [d]$ let $T_i=(V_i,E_i)$ be the subtree of $T$ rooted at $v_i$ which consists of all descendants of $v_i$ in $T$.

To prove Theorem~\ref{thm:main2}, we will use tree recursions to express the marginal at the root in terms of the marginals at the children (as in previous works on WSM/SSM, see, e.g., \cite{BW,GS,GGY}).  This recursion is the well-known {\em Belief Propagation} (BP) equation \cite{Pearl}; our proof of  Theorem \ref{thm:main2} will be based on bounding appropriately the gradient of the BP equations. 
The  new ingredient in our analysis is that we  incorporate the combinatorial structure of agreements close to the root into a refined $L^2$ 
analysis of the  gradient.

Prior to delving into the analysis, we first describe the BP equation for the colorings model. Following the notation of Theorem~\ref{thm:main2}, let $\pib=\pib_{T,\rho, \eta}$, $\pib'=\pib_{T, \rho, \eta'}$ be the marginal distributions at the root of the tree $T$ under the boundary conditions $\eta$ and $\eta'$, respectively. Similarly,  for $i\in [d]$, let $\pib_i$, $\pib_i'$ be the marginals at the root $v_i$ of the subtree $T_i$ under $\eta(\Lambda\cap V_i)$ and  $\eta'(\Lambda\cap V_i)$, respectively.

 We can  now relate the distribution $\pib$ with the distributions $\{\pib_i\}_{i\in [d]}$ (and similarly, $\pib'$ with the distributions $\{\pib_i'\}_{i\in [d]}$) as follows. For $q$-dimensional probability vectors  $\xb_1,\hdots, \xb_d$ and a color $c\in [q]$, let $f_c$ be the function
\begin{equation}\label{def:fc}
f_c(\xb_1,\hdots, \xb_d)=\frac{\prod_{i\in [d]}\big(1-x_{i,c}\big)}{\sum_{j\in [q]}\prod_{i\in [d]}\big(1-x_{i,j}\big)},
\end{equation}
where, for $i\in [d]$ and $j\in [q]$, $x_{i,j}$ denotes the $j$-th entry of the vector $\xb_{i}$.  Then, with $\pi_c$ and $\pi_c'$ denoting the $c$-th entries of $\pib$ and $\pib'$, we have that 
\begin{equation}\label{eq:tree1recursion}
\begin{aligned}
\pi_c&=\mu_T(\sigma_\rho=c\mid \sigma_\Lambda=\eta)=f_c(\pib_{1},\hdots,\pib_{d}),\\
\pi_c'&=\mu_T(\sigma_\rho=c\mid \sigma_\Lambda=\eta')=f_c(\pib_{1}',\hdots,\pib_{d}').
\end{aligned}
\end{equation}
The functions $\{f_c\}_{c\in [q]}$ correspond to the BP equations for the coloring model.

We are now ready to describe in more detail our SSM analysis. Specifically, to get a bound on the norm $\left\|\pib-\pib'\right\|_2$, we will study the gradient of $f_c$  as we change the arguments $(\pib_{1},\hdots,\pib_{d})$ to $(\pib_{1}',\hdots,\pib_{d}')$ along the line connecting them. Our gradient analysis will take account of the following combinatorial notions. 
\begin{definition}
A vertex $v$ of $T$ is called \emph{frozen} under $\eta$ if $v\in \Lambda$ and \emph{non-frozen} otherwise. For a non-frozen vertex $v$ of $T$, a color $k$ is \emph{blocked} for $v$ (under $\eta$) if  there is a neighbor $u\in \Lambda$ of $v$ such that  $\eta(u)=k$; the color is called \emph{available} for $v$ otherwise. 
\end{definition}
\begin{observation}\label{obs:sameblocked}
In the setting of Theorem~\ref{thm:main2}, we have that the disagreements between $\eta$ and $\eta'$ occur at distance at least 3 from the root. It follows that the set of the root's children that are frozen as well as the set of blocked colors for each of the non-frozen children are identical under both $\eta$ and $\eta'$. 
\end{observation}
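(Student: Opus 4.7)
The plan is to unpack the definitions of \emph{frozen} and \emph{blocked} and observe that both notions, when applied to the children $v_1,\hdots,v_d$ of $\rho$, depend only on the values of the boundary assignment at vertices lying within graph distance $2$ of $\rho$. Since the hypothesis of Theorem~\ref{thm:main2} gives $\dist(\rho,\Delta)\geq 3$, no such vertex belongs to $\Delta$, so on the relevant set of vertices $\eta$ and $\eta'$ coincide. The statement then follows by reading off the definitions.

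First I would dispatch the claim about frozen children. By Definition, $v_i$ is frozen under $\eta$ iff $v_i\in\Lambda$, and the set $\Lambda$ is fixed in the setup of Theorem~\ref{thm:main2} independently of which of the two assignments we use. Hence the set $\{v_i:v_i\text{ is frozen under }\eta\}$ equals $\{v_i:v_i\text{ is frozen under }\eta'\}$, and the non-frozen children also coincide as a set. Next, I would turn to a non-frozen child $v_i$ and enumerate its neighbors in $T$: these are $\rho$ itself (at distance $0$ from $\rho$) and the children of $v_i$ in $T$ (at distance $2$ from $\rho$). In either case the neighbor lies at distance at most $2$ from $\rho$, strictly less than $3$, so it cannot be a vertex of $\Delta$. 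Consequently, for every neighbor $u$ of $v_i$ with $u\in\Lambda$ we have $\eta(u)=\eta'(u)$, since $\eta$ and $\eta'$ agree on $\Lambda\setminus\Delta$. The set of colors blocked for $v_i$ under $\eta$ is exactly $\{\eta(u): u\in N(v_i)\cap\Lambda\}$ (and likewise for $\eta'$), so these two color sets are identical.

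There is no genuine obstacle in this argument: the observation is a direct consequence of the distance-$3$ hypothesis, which is built into Theorem~\ref{thm:main2} precisely so that the local combinatorial data around each child of the root (its frozen status and, if non-frozen, its list of available colors) is the same under $\eta$ and $\eta'$. This invariance is what later allows the refined $L^2$ gradient analysis of the BP operator to treat the two boundary conditions on a common combinatorial footing at the top level of the recursion.
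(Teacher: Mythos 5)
Your proof is correct and matches the paper's (implicit) reasoning: the paper states this as an immediate observation, and your unpacking—frozen status depends only on membership in the common set $\Lambda$, while blocked colors for a child $v_i$ depend only on boundary values at neighbors of $v_i$, all of which lie within distance $2$ of $\rho$ and hence outside $\Delta$—is exactly the intended justification.
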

We will utilize that the gradient components that correspond to either frozen children or blocked colors can be disregarded since, by Observation~\ref{obs:sameblocked}, the corresponding arguments in \eqref{eq:tree1recursion} are fixed to the same value. Namely, we will  track, for each color $c$,  the fraction of non-frozen 
children which have color $c$ available. This will allow us in the upcoming Lemma~\ref{lem:mainlemma1} to aggregate accurately the gradient components  corresponding to color $c$. 
The following definitions setup some relevant notation.
\begin{definition}\label{def:gammas}
Let $D$ be the indices of the children of the root which are non-frozen under $\eta$ and $\eta'$. For a color $c\in [q]$, let  $\gamma_c\in[0,1]$ be the \emph{fraction} of indices $i\in D$ such that color $c$ is available for $v_i$ under $\eta$ and $\eta'$ (cf.  Observation~\ref{obs:sameblocked}). Let $\gammab$ and  $\sqrt{\gammab}$ be the $q$-dimensional vector with entries $\{\gamma_c\}_{c\in [q]}$ and $\{\sqrt{\gamma_c}\}_{c\in [q]}$, respectively.  
\end{definition}
Intuitively, if $\gamma_c$ is close to 0, color $c$ is blocked at a lot of the children  and hence the distance $\left\|\pib-\pib'\right\|_2$ at the root should not depend a lot on the color $c$ (since most components of the gradient corresponding to color $c$ are zero).

The following couple of definitions will be relevant for capturing more precisely the gradient of the functions $\{f_c\}_{c\in [q]}$. To begin with, the gradient will actually turn out to be related to the value of $f_c$ as we move along the line $(\pib_{1},\hdots,\pib_{d})$ to $(\pib_{1}',\hdots,\pib_{d}')$. More precisely, we have the following definition.
\begin{definition}\label{def:hatpi}
For $t\in[0,1]$, let $\hat{\pib}(t)=\{\hat{\pi}_c(t)\}_{c\in [q]}$ be the $q$-dimensional probability vector whose $c$-th entry is given by $f_c\big(t\pib_1+(1-t)\pib_1',\hdots, t\pib_d+(1-t)\pib_d'\big)$.
\end{definition}
Note that $\hat{\pib}(1)=\pib$ and $\hat{\pib}(0)=\pib'$; in this sense, we can think of the vector $\hat{\pib}(t)$ as having the marginals at the root as we interpolate between $(\pib_{1},\hdots,\pib_{d})$ to $(\pib_{1}',\hdots,\pib_{d}')$.

The next definition will  be relevant for bounding the $L^2$ norm of the gradient along the line connecting to $(\pib_{1},\hdots,\pib_{d})$ to $(\pib_{1}',\hdots,\pib_{d}')$. The bound will be in terms of the ``marginals'' at the root, as captured by the vector $\hat{\pib}(t)$ (cf. Definition~\ref{def:hatpi}), and the availability of the $q$ colors at the children, as captured by the vector $\gammab$ (cf. Definition~\ref{def:gammas}). In particular, we will be interested in the $L^2$ norm of the following matrix, which is an idealized version to the Jacobian of the BP equation (see \eqref{eq:partialderivatives} for the precise formula).\footnote{For a square matrix $\Mb$, we use $\left\|\Mb\right\|_2$ to denote its $L^2$ norm, i.e., $\left\|\Mb\right\|_2=\max_{\left\|\xb\right\|_2=1} \left\|\Mb\xb\right\|_2$. A fact that will be useful later is that $\left\|\Mb\right\|_2=\max_{\left\|\xb\right\|_2=1} \left\|\xb^{\T}\Mb\right\|_2$, even for non-symmetric matrices $\Mb$.}
\begin{definition}\label{def:gradientmatrix}
Let $\hat{\pib},\hat{\gammab}$ be $q$-dimensional vectors with non-negative entries. The matrix $\Mb_{\hat{\pib},\hat{\gammab}}$ corresponding to the vectors $\hat{\pib},\hat{\gammab}$ is given by $\big(\mathrm{diag}(\hat{\pib})-\hat{\pib} \hat{\pib}^{\T}\big)\mathrm{diag}(\hat{\gammab}\big)$.\footnote{For a vector $\mathbf{\xb}$, $\mathrm{diag}(\xb)$ denotes the diagonal matrix with the entries of $\xb$ on the diagonal.}
\end{definition}

Our first main lemma shows how to bound the distance between the marginals at the root under $\eta$ and $\eta'$, i.e., $\big\| \pib-\pib'\big\|_2^2$, in terms of the aggregate distance at the children.  The new ingredient in our bound is to account more carefully for the availability of the colors at the children (i.e., the vector $\gammab$).
\begin{lemma}\label{lem:mainlemma1}
Let $q,d$ be positive integers so that $q\geq d+2$.   Then
\begin{equation*}
\big\| \pib-\pib'\big\|_2^2\leq |D|K^2\sum_{i\in [d]}\big\| \pib_{i}-\pib_{i'}\big\|_2^2 \qquad \mbox{ where $K:=\frac{1}{1-\tfrac{1}{q-d}}\max_{t\in (0,1)} \left\|\Mb_{\hat{\pib}(t),\sqrt{\gammab}}\right\|_2$,}
\end{equation*} 
where $D,\gammab, \sqrt{\gammab}$ are as in Definition~\ref{def:gammas}, $\hat{\pib}(t)$ is as in Definition~\ref{def:hatpi}, and  $\Mb_{\hat{\pib}(t),\sqrt{\gammab}}$ is as in Definition~\ref{def:gradientmatrix}. 
\end{lemma}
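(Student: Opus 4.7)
The plan is to write $\pib-\pib'$ as an integral of the BP Jacobian along the line segment from $(\pib'_1,\ldots,\pib'_d)$ to $(\pib_1,\ldots,\pib_d)$, and then to bound the resulting integrand in $L^2$. Put $\bar\pib_i(t):=t\pib_i+(1-t)\pib_i'$ and recall from Definition~\ref{def:hatpi} that $\hat\pib(t)=f(\bar\pib_1(t),\ldots,\bar\pib_d(t))$. A direct differentiation of \eqref{def:fc} gives, for every $i\in[d]$ and $c,j\in[q]$,
\begin{equation*}
\frac{\partial f_c}{\partial x_{i,c}}=-\frac{f_c(1-f_c)}{1-x_{i,c}},\qquad \frac{\partial f_c}{\partial x_{i,j}}=\frac{f_c f_j}{1-x_{i,j}}\quad(j\neq c),
\end{equation*}
so the Jacobian of $f$ with respect to its $i$-th argument factors as $J^{(i)}(t)=-A(t)\,D_i(t)$, where $A(t):=\mathrm{diag}(\hat\pib(t))-\hat\pib(t)\hat\pib(t)^{\T}$ and $D_i(t)$ is the diagonal matrix with $(c,c)$ entry $1/(1-\bar x_{i,c}(t))$ and $\bar x_{i,c}(t):=t\pi_{i,c}+(1-t)\pi'_{i,c}$. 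Because $A(t)$ is common to every $i$ and thus pulls outside the sum, the fundamental theorem of calculus and the triangle inequality give
\begin{equation*}
\|\pib-\pib'\|_2\leq\max_{t\in(0,1)}\|A(t)\,y(t)\|_2,\qquad y(t):=\sum_{i\in[d]}D_i(t)(\pib_i-\pib'_i).
\end{equation*}

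The next step exploits the combinatorial structure. If $v_i$ is frozen then $\pib_i=\pib'_i$, so only $i\in D$ contributes to $y(t)$. By Observation~\ref{obs:sameblocked}, at every non-frozen $v_i$ a blocked color $c$ has $\pi_{i,c}=\pi'_{i,c}=0$, killing the $c$-th coordinate of $D_i(t)(\pib_i-\pib'_i)$. On the remaining \emph{available} coordinates, a routine induction on the height of $T_i$ using the BP recursion \eqref{eq:tree1recursion} (valid whenever $q\geq d+2$) yields the marginal bound $\pi_{i,c},\pi'_{i,c}\leq 1/(q-d)$, so that $1/(1-\bar x_{i,c}(t))\leq L:=1/(1-1/(q-d))$. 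Letting $a_{i,c}\in\{0,1\}$ denote the indicator that $c$ is available at $v_i$, one has $\sum_{i\in D}a_{i,c}=|D|\gamma_c$ by Definition~\ref{def:gammas}, and Cauchy--Schwarz (with $a_{i,c}^2=a_{i,c}$) gives componentwise
\begin{equation*}
|y_c(t)|\leq L\sum_{i\in D}a_{i,c}|\pi_{i,c}-\pi'_{i,c}|\leq L\sqrt{\sum_{i\in D}a_{i,c}}\;\sqrt{\sum_{i\in D}a_{i,c}(\pi_{i,c}-\pi'_{i,c})^2}=L\sqrt{|D|\,\gamma_c\,s_c},
\end{equation*}
where $s_c:=\sum_{i\in D}a_{i,c}(\pi_{i,c}-\pi'_{i,c})^2$.

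To surface the matrix $\Mb_{\hat\pib(t),\sqrt\gammab}$, define $\tilde r_c:=y_c(t)/\sqrt{\gamma_c}$ when $\gamma_c>0$ and $\tilde r_c:=0$ otherwise (if $\gamma_c=0$ then $y_c(t)=0$ automatically from the preceding bound). Then $y(t)=\mathrm{diag}(\sqrt\gammab)\,\tilde r$, so $A(t)y(t)=\Mb_{\hat\pib(t),\sqrt\gammab}\,\tilde r$, and using $\|\tilde r\|_2^2\leq L^2|D|\sum_c s_c=L^2|D|\sum_{i\in[d]}\|\pib_i-\pib'_i\|_2^2$ one obtains
\begin{equation*}
\|A(t)y(t)\|_2\leq\|\Mb_{\hat\pib(t),\sqrt\gammab}\|_2\,\|\tilde r\|_2\leq L\,\|\Mb_{\hat\pib(t),\sqrt\gammab}\|_2\,\sqrt{|D|\sum_{i\in[d]}\|\pib_i-\pib'_i\|_2^2}.
\end{equation*}
Taking the maximum over $t$ and squaring yields the lemma.

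The delicate step, and the genuine novelty compared with earlier analyses, is this final Cauchy--Schwarz move: it is what converts the per-child diagonal factors $1/(1-\bar x_{i,c})$ into the single availability vector $\sqrt\gammab$, and it is what produces the factor $|D|$ (rather than the coarser $d$) on the right-hand side. Making it work requires carefully tracking which coordinates of $\pib_i-\pib'_i$ are identically zero, using $a_{i,c}^2=a_{i,c}$ to insert $\sqrt{\gamma_c}$ at exactly the right place, and checking that $y_c(t)=0$ whenever $\gamma_c=0$. The marginal bound $\pi_{i,c}\leq 1/(q-d)$ and the closed-form Jacobian are, by contrast, standard.
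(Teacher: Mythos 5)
Your proposal is correct and follows essentially the same route as the paper's own proof: integrate the BP Jacobian along the segment from $(\pib'_1,\ldots,\pib'_d)$ to $(\pib_1,\ldots,\pib_d)$, factor the Jacobian as $(\mathrm{diag}(\hat\pib)-\hat\pib\hat\pib^{\T})$ times diagonal per-child factors $1/(1-z_{i,j}(t))$, use $z_{i,j}(t)\le 1/(q-d)$, and exploit that blocked colors contribute zero to replace the per-child diagonal factors by the single vector $\sqrt{\gammab}$, surfacing the matrix $\Mb_{\hat\pib(t),\sqrt{\gammab}}$. The only differences are cosmetic: you use the triangle inequality for vector-valued integrals $\bigl\|\int\cdots\bigr\|_2\le\max_t\|\cdots\|_2$ where the paper applies Cauchy--Schwarz for integrals componentwise, $(\int\cdots)^2\le\int(\cdots)^2$; and you insert $\sqrt{\gamma_c}$ via Cauchy--Schwarz with the availability indicator $a_{i,c}^2=a_{i,c}$ where the paper reaches the identical inequality (their \eqref{eq:powermean}) via the power-mean inequality applied to the vector $u_j(t)$ defined in \eqref{eq:4by6y5yyby}. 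One small caveat: you should define $y(t)$ by summing only over $i\in D$ from the outset, since for frozen $i$ the expression $1/(1-\bar x_{i,c}(t))$ can be $1/0$ even though the product with $\pib_i-\pib'_i=0$ is meant to vanish; the paper sidesteps this by restricting to $i\in D$ when writing $d\hat\pi_c/dt$.
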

Given Lemma~\ref{lem:mainlemma1}, we are left with obtaining a good upper bound on the norm $\big\|\Mb_{\hat{\pib}(t),\sqrt{\gammab}}\big\|_2$ that takes advantage of the presence of the vector $\gammab$. It is not hard to see that the $L^2$ norm of the  matrix $\big(\mathrm{diag}(\hat{\pib})-\hat{\pib} \hat{\pib}^{\T}\big)$ is bounded by $\max_{j\in [q]}\hat{\pi}_j$. The following result can be seen as a generalisation of this fact, which is however significantly more involved to prove. The proof is given in Section~\ref{sec:mainlemma2}.
\begin{lemma}\label{lem:mainlemma2}
Let $q$ be a positive integer, $\hat{\pib}$ be a $q$-dimensional probability vector and $\hat{\gammab}$ be a $q$-dimensional vector with non-negative entries which are all bounded by 1. Then,  the $L^2$ norm of the matrix $\Mb_{\hat{\pib},\hat{\gammab}}=\big(\mathrm{diag}(\hat{\pib})-\hat{\pib} \hat{\pib}^{\T}\big)\mathrm{diag}(\hat{\gammab}\big)$ satisfies
\[\left\|\Mb_{\hat{\pib},\hat{\gammab}}\right\|_2\leq \frac{1}{2}\max_{j\in [q]} \hat{\pi}_j\big(1+(\hat{\gamma}_j)^2\big),\]
where $\{\hat{\pi}_j\}_{j\in [q]}, \{\hat{\gamma}_j\}_{j\in [q]}$ are the entries of $\hat{\pib},\hat{\gammab}$, respectively.
\end{lemma}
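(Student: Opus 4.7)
The plan is to bound the operator norm $\|\Mb_{\hat{\pib},\hat{\gammab}}\|_2 \le K := \tfrac{1}{2}\max_j\hat{\pi}_j(1+\hat{\gamma}_j^2)$ by establishing $|\xb^{\T}\Mb\yb|\le K$ for every pair of unit vectors $\xb,\yb\in\mathbb R^q$, which is equivalent to the desired spectral bound. The first step is a direct expansion that recognises $\xb^{\T}\Mb\yb$ as a $\hat\pib$-weighted covariance: writing out the definition of $\Mb$ gives
\[
\xb^{\T}\Mb\yb = \sum_j\hat\pi_j x_j y_j\hat\gamma_j - \Bigl(\sum_j\hat\pi_j x_j\Bigr)\Bigl(\sum_j\hat\pi_j y_j\hat\gamma_j\Bigr),
\]
which is exactly the covariance $\mathrm{Cov}(X,Y)$ of the real-valued random variables $X(j)=x_j$ and $Y(j)=y_j\hat\gamma_j$ defined on $[q]$ equipped with the probability measure $\hat\pib$.

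The crucial second step is to invoke the polarisation identity $4\,\mathrm{Cov}(X,Y) = \mathrm{Var}(X+Y) - \mathrm{Var}(X-Y)$. Since both variances are nonnegative, this yields $4\,|\mathrm{Cov}(X,Y)|\le \max\bigl(\mathrm{Var}(X+Y),\,\mathrm{Var}(X-Y)\bigr)$. I then bound each variance by the corresponding second moment, $\mathrm{Var}(X\pm Y)\le\mathbb E[(X\pm Y)^2]$, and apply the pointwise two-dimensional Cauchy--Schwarz inequality $(x_j\pm y_j\hat\gamma_j)^2 \le (1+\hat\gamma_j^2)(x_j^2+y_j^2)$; summing against $\hat\pi_j$ gives
\[
\mathrm{Var}(X\pm Y)\le\sum_j\hat\pi_j(1+\hat\gamma_j^2)(x_j^2+y_j^2) \le \max_j\hat\pi_j(1+\hat\gamma_j^2)\cdot(\|\xb\|_2^2+\|\yb\|_2^2) = 4K.
\]
Combining the two bounds yields $|\xb^{\T}\Mb\yb|=|\mathrm{Cov}(X,Y)|\le K$, as required.

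The main conceptual obstacle is obtaining a bound of the precise form $\tfrac12\max_j\hat\pi_j(1+\hat\gamma_j^2)$, rather than something looser such as $\sqrt{\max_j\hat\pi_j\cdot\max_j\hat\pi_j\hat\gamma_j^2}$, which is what a naive Cauchy--Schwarz on the covariance produces; the latter can in fact exceed $K$ when the two maxima are attained at different indices. The essential trick is that polarisation keeps $\xb$ and $\yb$ bundled together in the expressions $\xb\pm\yb\hat\gammab$ rather than splitting them, so that the pointwise Cauchy--Schwarz $(x_j\pm y_j\hat\gamma_j)^2\le(1+\hat\gamma_j^2)(x_j^2+y_j^2)$ can be applied index-by-index with the full $\hat\pi_j(1+\hat\gamma_j^2)$ weight attached to each coordinate, and the maximum over $j$ taken only at the very last step.
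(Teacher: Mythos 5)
Your proof is correct, and it takes a genuinely different and noticeably cleaner route than the paper's. The paper works with $\|\xb^{\T}\Mb_{\hat{\pib},\hat{\gammab}}\|_2^2$ directly: it substitutes $y_j = x_j - \sum_k\hat\pi_k x_k$, exploits $\sum_j\hat\pi_j y_j = 0$, minimises a quadratic over a free shift $z$, then introduces a second free parameter $t$ and applies Cauchy--Schwarz to reduce everything to the scalar inequality $\sum_j(A_j-1)\sum_jA_j\hat\pi_j^2\le(\sum_jA_j\hat\pi_j)^2$ with $A_j=C^2/(C^2-\hat\pi_j^2\hat\gamma_j^2)$, which is finally dispatched by AM--GM. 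Your proof instead reads off $\xb^{\T}\Mb_{\hat{\pib},\hat{\gammab}}\yb$ as a $\hat\pib$-weighted covariance, uses the polarisation identity $4\,\mathrm{Cov}(X,Y)=\mathrm{Var}(X+Y)-\mathrm{Var}(X-Y)$ to avoid the naive $\sqrt{\mathrm{Var}(X)\mathrm{Var}(Y)}$ split (which, as you correctly observe, can overshoot the target when the two relevant maxima live at different indices), and then closes with the pointwise two-dimensional Cauchy--Schwarz bound $(x_j\pm y_j\hat\gamma_j)^2\le(1+\hat\gamma_j^2)(x_j^2+y_j^2)$. Each step is elementary and checks out: the expansion of $\xb^{\T}\Mb_{\hat{\pib},\hat{\gammab}}\yb$ is correct, $4|\mathrm{Cov}(X,Y)|\le\max(\mathrm{Var}(X+Y),\mathrm{Var}(X-Y))$ holds because both variances are nonnegative, and the final chain $\mathrm{Var}(X\pm Y)\le\sum_j\hat\pi_j(1+\hat\gamma_j^2)(x_j^2+y_j^2)\le 2\max_j\hat\pi_j(1+\hat\gamma_j^2)=4K$ uses $\|\xb\|_2=\|\yb\|_2=1$. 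What your approach buys is conceptual transparency and brevity: it bypasses the paper's two rounds of auxiliary-parameter optimisation entirely and makes visible exactly where the factor $\tfrac12(1+\hat\gamma_j^2)$ originates, namely from pairing polarisation with the vector $(1,\hat\gamma_j)$ in the per-coordinate Cauchy--Schwarz. The paper's proof is more mechanical but has the minor advantage of never needing to pass to the bilinear form $\xb^{\T}\Mb\yb$ with two independent vectors.
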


The final component of our proof is to utilize the bound in Lemma~\ref{lem:mainlemma2} to  derive an  upper bound on the norm of the matrix $\Mb_{\hat{\pib}(t),\sqrt{\gammab}}$ appearing in Lemma~\ref{lem:mainlemma1}. To prove Theorem~\ref{thm:main2}, we  roughly need to show that the norm is bounded by $1/|D|$. We show that this is indeed the case in Section~\ref{sec:mainlemma3}.
\begin{lemma}\label{lem:mainlemma3}
There exist absolute constants  $\beta>0$ and $K'\in(0,1)$ such that the following holds for all positive integers $q,d$ satisfying $q\geq 1.59d+\beta$.   

Let $\gammab,\hat{\pib}(t)$ be the $q$-dimensional vectors of Definitions~\ref{def:gammas} and~\ref{def:hatpi}, respectively. Then, for all $t\in [0,1]$ and all colors $k\in [q]$, it holds that
\begin{equation*}
\frac{1}{2}\hat{\pi}_k(t)(1+\gamma_k)<K'/|D|,
\end{equation*}  
where $D$ is the set of non-frozen children of $\rho$ under $\eta$ and $\eta'$.
\end{lemma}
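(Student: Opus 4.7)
The plan is to write out the BP formula
\[
\hat\pi_k(t)=\frac{a_k}{Z},\quad a_k:=\prod_{i\in[d]}\bigl(1-y_{i,k}\bigr),\quad Z:=\sum_{j\in[q]}\prod_{i\in[d]}\bigl(1-y_{i,j}\bigr),\quad y_{i,j}:=t\pi_{i,j}+(1-t)\pi'_{i,j},
\]
and bound $a_k$ from above and $Z$ from below sufficiently tightly. The trivial case is when some frozen child $v_i$ of $\rho$ is colored $k$: then $y_{i,k}=1$, $a_k=0$, $\hat\pi_k(t)=0$, and the claim holds; so in what follows assume no frozen child of $\rho$ has color $k$.

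Next I would establish the standard two-sided bound on the internal marginals: for every non-frozen vertex $v$ of the subtree and every color $c$ not blocked at $v$,
\[
\frac{\ell}{q}\;\leq\;\pi_{v,c}\;\leq\;\frac{1}{q-d},\qquad \ell:=\Bigl(1-\tfrac{1}{q-d}\Bigr)^{d}.
\]
Both bounds follow by induction on subtree height from the BP recursion: the upper bound uses $\sum_{j}\prod_{u}(1-\pi_{u,j})\geq q-d$ (via Bernoulli's inequality $\prod_u(1-x_u)\geq 1-\sum_u x_u$ summed over the $q$ colors) together with $\prod_u(1-\pi_{u,c})\leq 1$, while the lower bound uses $\prod_u(1-\pi_{u,c})\geq \ell$ (from the inductive upper bound on children's marginals and the fact that no child of $v$ is frozen to the available color $c$) combined with the trivial $\sum_j\prod_u(1-\pi_{u,j})\leq q$. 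Convex combinations preserve these bounds, hence $\ell/q\leq y_{i,k}\leq 1/(q-d)$ for every $i\in D$ with $k$ available at $v_i$, and $y_{i,k}=0$ otherwise. This gives the upper bound on the numerator,
\[
a_k\leq \Bigl(1-\tfrac{\ell}{q}\Bigr)^{\gamma_k|D|}\leq \exp\!\Bigl(-\tfrac{\ell\,\gamma_k|D|}{q}\Bigr).
\]

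For the denominator I would use the pointwise exponential bound $\prod_i(1-y_{i,j})\geq\exp(-c\sum_i y_{i,j})$ valid on $[0,1/(q-d)]$ with $c:=-(q-d)\log(1-1/(q-d))\to 1$ as $q-d\to\infty$, giving $Z\geq\sum_j\exp(-c\sum_i y_{i,j})$. A naive Jensen application that uses only the global mass constraint $\sum_{j,i}y_{i,j}=|D|$ yields $Z\geq q\,e^{-c|D|/q}$, but combined with the bound on $a_k$ this would only push the threshold of the lemma down to $\alpha'=q/d\approx 1.64$. To match the advertised threshold, one must feed in the extra per-coordinate constraint $\sum_iy_{i,k}\geq\gamma_k|D|\ell/q$: since this mass is effectively unavailable to the other colors, applying Jensen to the restricted sum $\sum_{j\neq k}\exp(-c\sum_iy_{i,j})$ with the reduced budget $|D|-\gamma_k|D|\ell/q$ picks up an additional multiplicative correction. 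Assembling these bounds into $\hat\pi_k(t)=a_k/Z$, multiplying by $(1+\gamma_k)|D|/2$, and noting that the worst case occurs at $|D|=d$ (since the resulting expression scales roughly like $(|D|/q)e^{|D|/q}$), one arrives at
\[
\tfrac{1}{2}\hat\pi_k(t)(1+\gamma_k)\cdot|D|\;\leq\;\tfrac{1}{\alpha'}\exp\!\Bigl(\tfrac{1}{\alpha'}\Bigr)\exp\!\Bigl(-\tfrac{1}{\alpha'-1+\exp(1/(\alpha'-1))}\Bigr),\qquad \alpha':=q/d.
\]
By the condition in the remark following Theorem~\ref{thm:maininformal}, the right-hand side is strictly below $1$ at $\alpha'=1.5897$, and by continuity it stays uniformly bounded away from $1$ once $\alpha'\geq 1.59$ and $q$ is large enough; choosing $K'<1$ and an absolute $\beta>0$ accordingly completes the proof.

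The main obstacle is the refined Jensen step on the denominator. The naive bound $Z\geq qe^{-c|D|/q}$ paired with $a_k\leq e^{-\ell\gamma_k|D|/q}$ yields only $(1/\alpha')e^{(c-\ell\gamma_k)/\alpha'}(1+\gamma_k)/2$, which at $\gamma_k=1$ and $\alpha'=1.59$ already exceeds $1$ (it evaluates to roughly $1.05$). The correction factor $\exp(-1/(\alpha'-1+e^{1/(\alpha'-1)}))$ is precisely what is needed to close this gap, and deriving it requires propagating the \emph{individual} lower bound $y_{i,k}\geq\ell/q$ through the Jensen step rather than merely using the aggregate mass budget $|D|$.
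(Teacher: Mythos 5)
You correctly identify both the target constant and the bottleneck (the naive bound yields roughly $1.05>1$ at $\alpha'=1.59$), but the proposed ``refined Jensen'' on the root denominator does not close the gap, and it is not what the paper does. The real issue is your per-child marginal lower bound. You use $\pi_{v,c}\geq \ell/q$ with $\ell=(1-\tfrac{1}{q-d})^d$, obtained from $\prod_u(1-\pi_{u,c})\geq \ell$ together with the trivial normalizer bound $\sum_j\prod_u(1-\pi_{u,j})\leq q$. The paper (Lemma~\ref{lemma:LowerBound4Marg}) sharpens this: by AM--GM plus convexity of $\sum_j(1-z_j)^{|D|}$ over probability vectors with entries in $[0,1/(q-d)]$, the normalizer is at most $d+(q-d)\ell$ (the maximum is at the extreme points with $d$ entries at $0$ and $q-d$ entries at $1/(q-d)$). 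This yields $L=\ell/(d+(q-d)\ell)>\ell/q$. In the regime $q=rd$, $d\to\infty$, one has $dL\to 1/(r-1+e^{1/(r-1)})$, which is precisely the exponent in the constant $C$ of \eqref{eq:Cdefdef}; by contrast $d\cdot(\ell/q)\to e^{-1/(r-1)}/r$, a strictly smaller quantity, which is the source of your $1.05$.

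Your proposed fix does not recover the difference. Writing $s_k=\sum_{i\in D}y_{i,k}$, your usable constraint is only $s_k\geq \gamma_k|D|\ell/q$, which at $\gamma_k=1$ is \emph{below} the Jensen optimizer $|D|/q$; since the Jensen lower bound on $\sum_j e^{-cs_j}$ is convex in each $s_j$ and minimized at the uniform point, this constraint does not improve the bound on the sum. Tracking the $j=k$ term separately from the rest gives (with $c\to 1$)
\[
\hat{\pi}_k(t)\ \lesssim\ \frac{e^{-s_k}}{e^{-s_k}+(q-1)\exp\bigl(-\tfrac{|D|-s_k}{q-1}\bigr)}\ =\ \frac{1}{1+(q-1)\exp\bigl(\tfrac{s_k q-|D|}{q-1}\bigr)},
\]
which is decreasing in $s_k$, so the bottleneck is still the lower bound on $s_k$. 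Plugging in $s_k=|D|\ell/q$ at $|D|=d$, $q=1.59d$, $d\to\infty$ reproduces $\approx 1.05>1$; plugging in $s_k=|D|L$ gives $\approx 0.9995<1$, matching the paper. So the refinement belongs at the per-child marginal (Lemma~\ref{lemma:LowerBound4Marg}), not at the root denominator. One further omission: to reduce to $\gamma_k=1$ you implicitly need $(1+\gamma)e^{-dL\gamma}$ to be increasing on $[0,1]$, which requires $dL<1/2$; the paper verifies $dL<1/3$ for $d\geq 2$ precisely for this step.
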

Assuming Lemmas~\ref{lem:mainlemma1},~\ref{lem:mainlemma2} and~\ref{lem:mainlemma3} for now, we next conclude the proof of Theorem~\ref{thm:main2}.
\begin{proof}[Proof of Theorem~\ref{thm:main2}]
Let $U':= (1+K')/2$ where $K'\in (0,1)$ is the constant in Lemma~\ref{lem:mainlemma3}. Let $\beta>0$ be a sufficiently large constant so that, for all $q\geq 1.59d+\beta$,   the conclusion of Lemma~\ref{lem:mainlemma3} applies  and $\frac{1}{1-\frac{1}{q-d)}}K'<U'$. We will show that
\begin{equation}\label{eq:5tg5gygyy44f4}
\big\| \pib-\pib'\big\|_2^2\leq U\max_{i\in [d]}\big\| \pib_{i}-\pib_{i'}\big\|_2^2, \mbox{ with } U:=(U')^2.
\end{equation}
Indeed, by Lemmas~\ref{lem:mainlemma1},~\ref{lem:mainlemma2} and~\ref{lem:mainlemma3}, we have that
\[\big\| \pib-\pib'\big\|_2^2\leq \frac{U}{|D|}\sum_{i\in[d]}\big\| \pib_{i}-\pib_{i}'\big\|_2^2.\]
Note that an index $i\notin D$ corresponds to a frozen child $v_i$ and therefore $\pib_{i}=\pib_{i}'$ for all $i\notin D$ and hence 
\[\frac{1}{|D|}\sum_{i\in[d]}\big\| \pib_{i}-\pib_{i}'\big\|_2^2\leq \max_{i\in [d]}\big\| \pib_{i}-\pib_{i'}\big\|_2^2,\]
proving \eqref{eq:5tg5gygyy44f4}. This completes the proof of Theorem~\ref{thm:main2}.
\end{proof}

\section{Bound on the matrix norm: proof of Lemma~\ref{lem:mainlemma2}}\label{sec:mainlemma2}
In this section, we prove Lemma~\ref{lem:mainlemma2}.
\begin{proof}[Proof of Lemma~\ref{lem:mainlemma2}]
For this proof, it will be convenient to simplify notation and use $\pib$ instead of $\hat{\pib}$ and $\gammab$ instead of $\hat{\gammab}$, so that $\Mb_{\pib,\gammab}$ becomes $\big(\mathrm{diag}(\pib)-\pib \pib^{\T}\big)\mathrm{diag}(\gammab\big)$. Let $C:=\frac{1}{2}\max_{j\in [q]} \pi_j(1+\gamma_j^2)$. We will establish  that $\left\|\Mb_{\pib,\gammab}\right\|_2\leq C$ by showing that for an arbitrary $q$-dimensional vector $\xb$ it holds that 
\begin{equation}\label{eq:main1main1}
\left\|\xb^{\T}\Mb_{\pib,\gammab}\right\|_2^2\leq C^2\left\|\xb\right\|_2^2.
\end{equation} 
We will focus on proving \eqref{eq:main1main1} in the case where the entries of the vector $\gammab$ are all nonnegative and strictly less than one; the case where some of the entries of $\gammab$ are equal to 1 follows from the continuity of \eqref{eq:main1main1} with respect to $\gammab$.

So, assume that $\gamma_j\in [0,1)$ for all $j\in [q]$. Observe that 
\[\left\|\xb^{\T}\Mb_{\pib,\gammab}\right\|^2_2=\sum_{j\in [q]}\pi_j^2\gamma_j^2(x_j-w)^2 \mbox{ where } w:=\sum_{j\in [q]}\pi_j x_j.\]
Let $y_j= x_j -w$ for $j\in[q]$. Since $\pib$ is a probability vector, we have 
\[\sum_{j\in [q]}\pi_j y_j=0.\] 
Moreover, we can rewrite \eqref{eq:main1main1} as
\begin{equation}\label{eq:y4byh65r}
\sum_{j\in [q]}\frac{\pi_j^2\gamma_j^2}{C^2}y^2_j\leq \sum_{j\in [q]}(y_j+w)^2.
\end{equation}
Note that the function $f(z)=\sum_{j\in [q]}(y_j+z)^2$ achieves its minimum for $z^*=-\frac{1}{q}\sum_{j\in[q]} y_j$ and $f(z^*)=\sum_{j\in [q]}y_j^2-\frac{1}{q}\big(\sum_{j\in [q]}y_j\big)^2$. Hence, to prove \eqref{eq:y4byh65r} (and therefore \eqref{eq:main1main1}), it suffices to show that
\begin{equation}\label{eq:4rf4rf4678888}
\bigg(\sum_{j\in [q]}y_j\bigg)^2\leq q \sum_{j\in [q]}\frac{y_j^2}{A_j}, \mbox{ where } A_j:=\frac{C^2}{C^2-\pi_j^2\gamma_j^2}
\end{equation}
Note that the $A_j$'s are well-defined and greater than 1 for all $j\in [q]$ by our assumption that $\gamma_j\in [0,1)$, cf. the argument below \eqref{eq:main1main1}. Using that $\sum_{j\in [q]}\pi_j y_j=0$, we therefore obtain that \eqref{eq:4rf4rf4678888} is equivalent to
\begin{equation}\label{eq:4rf4rf4678888b}
\bigg(\sum_{j\in [q]}y_j(1+t\pi_j)\bigg)^2\leq q \sum_{j\in [q]}\frac{y_j^2}{A_j}, \mbox{ where } A_j:=\frac{C^2}{C^2-\pi_j^2\gamma_j^2},
\end{equation}
for any real number $t$ --- we will specify $t$ soon (cf. the upcoming \eqref{eq:tchoice}). In particular, by the Cauchy-Schwarz inequality, we have
\[\bigg(\sum_{j\in [q]}y_j(1+t\pi_j)\bigg)^2\leq \sum_{j\in [q]}\frac{y_j^2}{A_j}\sum_{j\in [q]}A_j(1+t\pi_j)^2,\]
so \eqref{eq:4rf4rf4678888b} and hence \eqref{eq:4rf4rf4678888} will follow if we find $t$ such that
\begin{equation}\label{eq:tg5tgygy4fr4}
\sum_{j\in [q]}A_j(1+t\pi_j)^2\leq q.
\end{equation}
We will choose $t$ to minimise the l.h.s. in \eqref{eq:tg5tgygy4fr4}, i.e., set
\begin{equation}\label{eq:tchoice}
t:=-\frac{\sum_{j\in [q]} A_j\pi_j }{\sum_{j\in [q]} A_j\pi_j^2},\quad \mbox{ so that } \quad \sum_{j\in [q]}A_j(1+t\pi_j)^2=\sum_{j\in [q]} A_j-\frac{\big(\sum_{j\in [q]} A_j\pi_j\big)^2}{\sum_{j\in [q]} A_j\pi_j^2}.
\end{equation}
Therefore, for this choice of $t$, \eqref{eq:tg5tgygy4fr4} becomes
\begin{equation}\label{eq:tg5tgygy4fr4b}
\sum_{j\in [q]}(A_j-1)\sum_{j\in [q]} A_j\pi_j^2\leq \bigg(\sum_{j\in [q]} A_j\pi_j\bigg)^2.
\end{equation}
Using that $A_j=\frac{C^2}{C^2-\pi_j^2\gamma_j^2}$, \eqref{eq:tg5tgygy4fr4b} is equivalent to (note the division by $C^2$ of both sides)
\begin{equation}\label{eq:tg5tgygy4fr4c}
\sum_{j\in [q]}\frac{\pi_j^2\gamma_j^2}{C^2-\pi_j^2\gamma_j^2}\sum_{j\in [q]} \frac{\pi_j^2}{C^2-\pi_j^2\gamma_j^2}\leq \bigg(\sum_{j\in [q]} \frac{C\pi_j}{C^2-\pi_j^2\gamma_j^2}\bigg)^2.
\end{equation}
We next establish \eqref{eq:tg5tgygy4fr4c}. We can upper bound the l.h.s. of \eqref{eq:tg5tgygy4fr4c} using the inequality $ab\leq \big(\frac{a+b}{2}\big)^2$, which gives that
\[\sum_{j\in [q]}\frac{\pi_j^2\gamma_j^2}{C^2-\pi_j^2\gamma_j^2}\sum_{j\in [q]}\frac{\pi_j^2}{C^2-\pi_j^2\gamma_j^2}\leq \bigg(\sum_{j\in [q]}\frac{\pi_j^2(1+\gamma_j^2)}{2(C^2-\pi_j^2\gamma_j^2)}\bigg)^2.\]
So, to prove \eqref{eq:tg5tgygy4fr4c}, it suffices to show that for each $i\in [q]$, it holds that
\[\frac{\pi_j^2(1+\gamma_j^2)}{2(C^2-\pi_j^2\gamma_j^2)}\leq  \frac{C\pi_j}{C^2-\pi_j^2\gamma_j^2}\]
which is indeed true, since $C\geq \frac{1}{2}\pi_j(1+\gamma_j^2)$ for all $i\in [q]$ by the definition of $C$. 

This proves \eqref{eq:tg5tgygy4fr4c}, which in turn establishes \eqref{eq:4rf4rf4678888b} for the choice of $t$ in \eqref{eq:tchoice}. This yields \eqref{eq:4rf4rf4678888} and hence \eqref{eq:main1main1} as well, finishing the proof of Lemma~\ref{lem:mainlemma2}.
\end{proof}

\section{Gradient analysis with blocked colors: proof of Lemma~\ref{lem:mainlemma1}}\label{sec:mainlemma1}
In this section, we prove Lemma~\ref{lem:mainlemma1}.
\begin{proof}[Proof of Lemma~\ref{lem:mainlemma1}]
For  $i\in [d]$ and $j\in [q]$, let $F^{(i)}_{c,j}(\mathbf{x})$ be the partial derivative $\frac{\partial f_c}{\partial x_{i,j}}$ viewed as a function of the ``concatenated'' vector $\xb=(\xb_1,\hdots, \xb_d)$. Note that, whenever $x_{i,j}\neq 1$, we have that
\begin{equation}\label{eq:partialderivatives}
\begin{aligned}
F^{(i)}_{c,j}(\mathbf{x})&=-\frac{f_c(\xb_1,\hdots, \xb_d)-\big(f_c(\xb_1,\hdots, \xb_d))^2}{1-x_{i,j}}\mbox{ if $j= c$},\\
F^{(i)}_{c,j}(\mathbf{x})&=\frac{f_c(\xb_1,\hdots, \xb_d)f_j(\xb_1,\hdots, \xb_d)}{1-x_{i,j}}\mbox{ if $j\neq c$}.
\end{aligned}
\end{equation}
As mentioned earlier, we will interpolate between $\pib$ and $\pib'$ by interpolating along the straight-line segment connecting  $(\pib_1,\hdots,\pib_d)$ and $(\pib_1',\hdots,\pib_d')$. In particular, for $t\in [0,1]$, let $\hat{\pi}_c(t)$ denote the $c$-th entry of the vector $\hat{\pib}(t)$ defined in the statement of the lemma. Then, we have that 
\begin{equation}\label{eq:expressmarginals}
\hat{\pi}_c(t)=f_c(\zb(t)), \mbox{ where   $\zb(t)$ is the vector $\big(t\pib_1+(1-t)\pib_1',\hdots, t\pib_d+(1-t)\pib_d'\big)$}.
\end{equation} 
We will use $z_{i,j}(t)$ to denote the $j$-th entry of the $i$-th vector in $\zb(t)$, i.e., $z_{i,j}(t)=t\pi_{i,j}+(1-t)\pi_{i,j}'$. 

Let $D$ be the set of indices $i$ such that $v_i$ is not frozen under $\eta$ and $\eta'$ (cf. Observation~\ref{obs:sameblocked}). Observe that, for all $i\notin D$ and $c,j\in [q]$, we have that $z_{i,j}(t)=\pi_{i,j}=\pi_{i,j}'$ for $t\in[0,1]$. Moreover, for $i\in D$ and $j\in [q]$ we have that $\pi_{i,j},\pi_{i,j}'\leq 1/(q-d)$ (since the child $v_i$ has at least $q-d$ available colors in the subtree $T_i$) and  hence
\begin{equation}\label{eq:zijtineq}
0\leq z_{i,j}(t)\leq 1/(q-d).
\end{equation} 
Since $z_{i,j}(t)\neq 1$ for $i\in D$ and $j\in [q]$, it follows that
\[\frac{d\hat{\pi}_c}{dt}=\sum_{i\in D}\sum^q_{j=1}F^{(i)}_{c,j}(\zb(t))(\pi_{i,j}-\pi_{i,j}').\]
Using \eqref{eq:tree1recursion}, we  therefore have that 
\begin{align*}
(\pi_c-\pi_c')^2&=\big(\hat{\pi}_c(1)-\hat{\pi}_c(0)\big)^2=\Big(\int^1_{0} \frac{d\hat{\pi}_c}{dt} dt\Big)^2=\Big(\int^{1}_{0}\sum_{i\in D}\sum^q_{j=1}F^{(i)}_{c,j}(\zb(t))(\pi_{i,j}-\pi_{i,j}')dt\Big)^2\\
&\leq \int^{1}_{0}\bigg(\sum_{i\in D}\sum^q_{j=1}F^{(i)}_{c,j}(\zb(t))(\pi_{i,j}-\pi_{i,j}')\bigg)^{2}dt,
\end{align*}
where the last inequality follows by applying the Cauchy-Schwarz inequality for integrals.
By summing over all colors $c\in [q]$, we obtain
\begin{equation}\label{eq:milestone1}
\big\| \pib-\pib'\big\|_2^2\leq \int^{1}_{0}\sum^{q}_{c=1}\bigg(\sum_{i\in D}\sum^q_{j=1}F^{(i)}_{c,j}(\zb(t))(\pi_{i,j}-\pi_{i,j}')\bigg)^{2}dt.
\end{equation}
To simplify the r.h.s. of \eqref{eq:milestone1}, we first note that, by \eqref{eq:partialderivatives} and \eqref{eq:expressmarginals}, we have   
\begin{equation}\label{eq:Acjtdef}
F^{(i)}_{c,j}(\zb(t))=\frac{A_{c,j}(t)}{1-z_{i,j}(t)} \mbox{ where } A_{c,j}:=\left\{\begin{array}{ll} \big(\hat{\pi}_c(t))^2-\hat{\pi}_c(t),& \mbox{ if $j= c$},\\[0.1cm] \hat{\pi}_c(t)\hat{\pi}_j(t), & \mbox{ if $j\neq c$}\end{array}\right.
\end{equation}
Moreover, for $j\in [q]$, set 
\begin{equation}\label{eq:4by6y5yyby}
u_j(t)=\frac{1}{|D|\gamma_j}\sum_{i\in D}\frac{\pi_{i,j}-\pi_{i,j}'}{1-z_{i,j}(t)} \mbox{ if $\gamma_j>0$, else set $u_j(t)=0$}.
\end{equation}
Note that if color $j$ is blocked for the child $v_i$ we have that $\pi_{i,j}-\pi_{i,j}'=0$, so  using the power mean inequality we have that
\begin{equation}\label{eq:powermean}
\gamma_j(u_j(t))^2\leq  \frac{1}{|D|}\sum_{i\in D}\Big(\frac{\pi_{i,j}-\pi_{i,j}'}{1-z_{i,j}}\Big)^2.
\end{equation}
Then, for $c\in [q]$, we have that 
\begin{equation}\label{eq:gb4by7657h}
\sum_{i\in D}\sum^q_{j=1}F^{(i)}_{c,j}(\zb(t))(\pi_{i,j}-\pi_{i,j}')=\sum^q_{j=1}A_{c,j}(t)\sum_{i\in D}\frac{\pi_{i,j}-\pi_{i,j}'}{1-z_{i,j}(t)}=|D|\sum^q_{j=1}A_{c,j}(t)\gamma_j u_j(t),
\end{equation}
where the last equality follows from \eqref{eq:4by6y5yyby} and observing that if $\gamma_j=0$ then $\pi_{i,j}-\pi_{i,j}'=0$ for all $i\in D$. Note that the $(c,q)$-entry of $\Mb_{\hat{\pib}(t),\sqrt{\gammab}}$ is exactly $-A_{c,j}(t)\sqrt{\gamma_j}$ (cf. \eqref{eq:Acjtdef} and Definition~\ref{def:gradientmatrix})  and hence, using \eqref{eq:gb4by7657h}, we can write the integrand in the r.h.s. of \eqref{eq:milestone1} as
\begin{equation}\label{eq:4tg45g5g123}
\sum^{q}_{c=1}\bigg(\sum_{i\in D}\sum^q_{j=1}F^{(i)}_{c,j}(\zb(t))(\pi_{i,j}-\pi_{i,j}')\bigg)^{2}=|D|^2\, \left\|\Mb_{\hat{\pib}(t),\sqrt{\gammab}}\ub(t)\right\|_2^2,
\end{equation}
where, for $t\in[0,1]$, $\ub(t)$ is the $q$-dimensional vector with entries $\{\sqrt{\gamma_j}\, u_j(t)\}_{j\in [q]}$. Let 
\[W:=\max_{t\in [0,1]}\left\|\Mb_{\hat{\pib}(t),\sqrt{\gammab}}\right \|_2, \mbox{ so that } K=\frac{W}{1-\frac{1}{q-d}}.\]
Then, for $t\in [0,1]$, we have that 
\begin{equation}\label{eq:vttg55ttta}
\begin{aligned}
\left\|\Mb_{\hat{\pib}(t),\sqrt{\gammab}} \ub(t)\right\|_2^2&\leq W^2 \left\|\ub(t)\right \|_2^2= W^2 \sum_{j\in [q]} \gamma_j (u_j(t))^2\leq \frac{W^2}{|D|}\sum_{j\in [q]}\sum_{i\in D} \left\|\frac{\pi_{i,j}-\pi_{i,j}'}{1-z_{i,j}(t)}\right\|_2^2\\
&\leq  \frac{K^2}{|D|}\sum_{j\in [q]}\sum_{i\in D} \left\|\pi_{i,j}-\pi_{i,j}'\right\|_2^2=  \frac{K^2}{|D|}\sum_{i\in [d]} \left\|\pib_i-\pib_i'\right\|_2^2,
\end{aligned}
\end{equation}
where the first inequality is by definition of the norm, the second inequality follows from  \eqref{eq:powermean}, the third inequality follows from $0\leq z_{i,j}(t)\leq 1/(q-d)$, and the last equality follows from the fact that for $i\notin D$ we have that $\pib_i=\pib_i'$. Combining \eqref{eq:milestone1}, \eqref{eq:4tg45g5g123} and \eqref{eq:vttg55ttta}, we obtain that
\[\big\| \pib-\pib'\big\|_2^2\leq |D|K^2\sum_{i\in [d]} \left\|\pib_i-\pib_i'\right\|_2^2.\]
This finishes the proof of Lemma~\ref{lem:mainlemma1}.
\end{proof}

\section{Bounds on the marginals: proof of Lemma~\ref{lem:mainlemma3}}\label{sec:mainlemma3}
In this section, we prove Lemma~\ref{lem:mainlemma3}. We begin with the following lemma.
\begin{lemma}\label{lemma:LowerBound4Marg}
Let $q,d,h$ be positive integers so that $q\geq d+1$ and $h\geq 1$. Let $T=\mathbb{T}_{d,h, \rho}$ be the $d$-ary tree with height $h$ rooted at $\rho$, $\Lambda$ be a subset of the vertices of $T$ such that $\rho\notin \Lambda$, and $\eta:\Lambda\rightarrow [q]$ be an extendible assignment of $T$. Then, for all colors $k\in [q]$ that are available for $\rho$ under $\eta$, it holds that
\[\mu_T(\sigma_\rho=k\mid \sigma_\Lambda=\eta)\geq \frac{\big(1-\frac{1}{q-d}\big)^{d}}{d+(q-d)\big(1-\frac{1}{q-d}\big)^{d}}.\]
\end{lemma}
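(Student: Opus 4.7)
The plan is to express the target marginal $p_k := \mu_T(\sigma_\rho = k \mid \sigma_\Lambda = \eta)$ via the BP recursion (as in \eqref{eq:tree1recursion}, with the root having $d$ children since $T$ is $d$-ary) in terms of the subtree marginals $p_{i,j} := \mu_{T_i}(\sigma_{v_i} = j \mid \sigma_{\Lambda \cap V_i} = \eta)$ at the children $v_1, \ldots, v_d$ of $\rho$, and then to bound the numerator from below and the denominator from above. Let $F \subseteq [d]$ be the set of frozen children and $D = [d]\setminus F$. Since $k$ is available at $\rho$, $k \neq \eta(v_i)$ for each $i \in F$, so $p_{i,k} = 0$ and the corresponding factor in the numerator equals $1$. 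Moreover, for any $j$ blocked at $\rho$ there is some $i \in F$ with $p_{i,j} = 1$, killing the corresponding term in the denominator. Writing $\mathcal{A}$ for the colors available at $\rho$, the recursion reduces to
\[
p_k = \frac{\prod_{i \in D}(1-p_{i,k})}{\sum_{j \in \mathcal{A}} \prod_{i \in D}(1-p_{i,j})}.
\]

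To lower-bound the numerator, I would invoke the standard fact (used implicitly in \eqref{eq:zijtineq}) that $p_{i,j} \leq 1/(q-d)$ for all $i \in D$ and $j \in [q]$: conditioning on the colors of $v_i$'s children in $T_i$, the color of $v_i$ is uniform over at least $q-d$ remaining colors, and the bound follows by integration. Setting $r := 1 - 1/(q-d)$ yields $\prod_{i \in D}(1-p_{i,k}) \geq r^{|D|}$. For the denominator, the key step is the AM-GM inequality
\[
\prod_{i \in D}(1-p_{i,j}) \leq \big(1 - \bar p_j\big)^{|D|}, \qquad \bar p_j := \frac{1}{|D|}\sum_{i \in D} p_{i,j},
\]
so that the denominator is bounded above by $\sum_{j=1}^q (1-\bar p_j)^{|D|}$. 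The averaged marginals satisfy $\bar p_j \in [0, 1/(q-d)]$ and $\sum_j \bar p_j = 1$. Since $\tau \mapsto (1-\tau)^{|D|}$ is convex, the maximum over this polytope is attained at a vertex; a short case analysis will show that every vertex has exactly $q-d$ coordinates equal to $1/(q-d)$ and $d$ coordinates equal to $0$, giving the bound $(q-d) r^{|D|} + d$ on the denominator.

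Combining the two bounds gives $p_k \geq r^{|D|}/\big(d + (q-d) r^{|D|}\big)$, which is already stronger than the claimed inequality. The last step is to observe that $x \mapsto x/(d + (q-d) x)$ is increasing in $x > 0$ and that $r^{|D|} \geq r^d$ (since $|D| \leq d$ and $r \in [0,1]$), reducing the bound to the desired $r^d/\big(d + (q-d) r^d\big)$. The main obstacle I anticipate is formally pinning down the vertex structure of the polytope $\{\bar p \in \mathbb{R}^q : \bar p_j \in [0, 1/(q-d)], \sum_j \bar p_j = 1\}$ and checking that exactly $q-d$ entries can simultaneously saturate the upper bound $1/(q-d)$; the edge cases $|D| = 0$ (where the products are empty and the bound should be interpreted as $p_k \geq 1/q$) and $q = d+1$ (where $r = 0$ and the bound is trivial) should be handled separately but present no difficulty.
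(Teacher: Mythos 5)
Your proposal is correct and follows essentially the same route as the paper's proof: express the marginal via the tree recursion restricted to the non-frozen children $D$, bound the numerator below by $(1-1/(q-d))^{|D|}$ using the pointwise bound $x_{i,j}\le 1/(q-d)$, extend the denominator sum from the available colors to all of $[q]$, apply AM-GM to reduce to averaged marginals $\bar p_j$, maximize the resulting convex function over the polytope $\{\bar p_j\in[0,1/(q-d)],\ \sum_j\bar p_j=1\}$ at its vertices, and finally use monotonicity of the ratio in $|D|\le d$. The vertex structure you flag as the ``main obstacle'' is the same step the paper invokes without further comment, and your handling of the $|D|=0$ and $q=d+1$ edge cases is consistent with (and subsumed by) the paper's argument.
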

\begin{proof}
Let $Q\subseteq [q]$ be the set of all colors that are available for $\rho$ under $\eta$ and let $k\in Q$. Let  $v_1,\hdots, v_{d}$ be the children of $\rho$ in $T$  and let $D=\{i\in [d]\mid v_i\notin \Lambda\}$ be the indices of the children of $\rho$ that do not belong to $\Lambda$. 

For $i\in [d]$, let $T_i=(V_i,E_i)$ be the subtree of $T$ rooted at $v_i$ which consists of all descendants of $v_i$ in $T$ (together with $v_i$ itself). Further, for a color $j\in [q]$, let 
\[x_{i,j}=\mu_{T_i}\big(\sigma_{v_i}=j\mid \sigma_{\Lambda\cap V_i}=\eta_{\Lambda\cap V_i}\big), \]
i.e., $x_{i,j}$ is the marginal probability that $v_i$ takes the color $j$ at $v_i$ in $\mu_{T_i}$ with boundary condition $\eta_{\Lambda\cap V_i}$. Note that 
\begin{equation}\label{eq:Constraint4Marginals}
0\leq x_{i,j}\leq \frac{1}{q-d} \mbox{ for all $i \in D$ and $j\in [q]$}, \qquad  \sum_{j\in [q]}x_{i,j}=1 \mbox{ for all $j\in [q]$}.
\end{equation}
Using the tree recursion \eqref{def:fc} and ignoring summands that are 0 or factors that are equal to 1, the marginal $\mu_T(\sigma_\rho=k\mid \sigma_\Lambda=\eta)$  is expressed in terms of $x_{i,j}$ as follows:
\begin{equation}\label{eq:v3tvtvtedew}
\mu_T(\sigma_\rho=k\mid \sigma_\Lambda=\eta)=\frac{\prod_{i\in D} (1-x_{i,k})}{\sum_{j\in Q}\prod_{i\in D} (1-x_{i,j})}.
\end{equation}
We prove the lemma by deriving an appropriate lower bound on the quantity at the r.h.s. of \eqref{eq:v3tvtvtedew}
subject to the  constraint in \eqref{eq:Constraint4Marginals}. For the numerator in \eqref{eq:v3tvtvtedew}, we have that
\begin{equation}\label{eq:4tv4t4tgt4}
\prod_{i\in D} (1-x_{i,k})\geq  \left(1-\frac{1}{q-d}\right)^{|D|}.
\end{equation}
For the denominator we are going to show the following:  
\begin{equation}\label{eq:UpBDenominator4v3tvtvtedew}
\sum_{j\in Q}\prod_{i\in D} (1-x_{i,j}) \leq d+(q-d)\left(1-\frac{1}{q-d}\right)^{|D|}.
\end{equation}
Before showing that \eqref{eq:UpBDenominator4v3tvtvtedew} is indeed true,  note that the lemma follows by plugging  \eqref{eq:4tv4t4tgt4}, \eqref{eq:UpBDenominator4v3tvtvtedew} into \eqref{eq:v3tvtvtedew}, yielding
$$
\mu_T(\sigma_\rho=k\mid \sigma_\Lambda=\eta) \geq \frac{\left(1-\frac{1}{q-d}\right)^{|D|}}{d+(q-d)\left(1-\frac{1}{q-d}\right)^{|D|}} \geq 
\frac{\left(1-\frac{1}{q-d}\right)^{d}}{d+(q-d)\left(1-\frac{1}{q-d}\right)^{d}},
$$
where the last inequality follows by noting that the ratio in the middle is decreasing in $|D|$ and $|D|\leq d$.

We now proceed with the proof of \eqref{eq:UpBDenominator4v3tvtvtedew}. 
First, we have the simple bound
\begin{equation}\label{eq:rtv5tvt121e}
\sum_{j\in Q}\prod_{i\in D} (1-x_{i,j})\leq \sum_{j\in [q]} \prod_{i\in D} (1-x_{i,j}).
\end{equation}
For $j\in [q]$, let $x_j=\frac{1}{|D|}\sum_{i\in D}x_{i,j}$ and note that $(x_1,\hdots,x_q)$ is a probability vector whose entries are in $[0,1/(q-d)]$. By the AM-GM inequality, we can bound the r.h.s. of \eqref{eq:rtv5tvt121e} by
\begin{equation}\label{eq:t554g5g56gfe}
\sum_{j\in [q]} \prod_{i\in D} (1-x_{i,j})\leq \sum_{j\in [q]} (1-x_{j})^{|D|}.
\end{equation}
It remains to observe that the function $f(\mathbf{z})=\sum_{j\in [q]}(1-z_j)^{|D|}$ is convex over the space of probability vectors $\mathbf{z}=(z_1,\hdots,z_q)$ whose entries are in $[0,1/(q-d)]$, and hence $f$ attains its maximum at the extreme points of the space, which are given by (the permutations of) the probability vector whose first $d$ entries are  equal to zero and the rest are equal to $1/(q-d)$. It follows that
\begin{equation}\label{eq:344f5f45111}
\sum_{j\in [q]} (1-x_{j})^{|D|}\leq d+(q-d)\Big(1-\frac{1}{q-d}\Big)^{|D|}.
\end{equation}
Combining \eqref{eq:rtv5tvt121e}, \eqref{eq:t554g5g56gfe} and \eqref{eq:344f5f45111} yields \eqref{eq:UpBDenominator4v3tvtvtedew}, thus concluding the proof of Lemma~\ref{lemma:LowerBound4Marg}.
\end{proof}

We are now ready to prove Lemma~\ref{lem:mainlemma3}.
\begin{proof}[Proof of Lemma~\ref{lem:mainlemma3}]
For convenience, let $r=1.59$, so that $q/d\geq r$. We will use that $r$ satisfies
\begin{equation}\label{eq:Cdefdef}
C:=\frac{1}{r}\exp\Big(\frac{1}{r}\Big)\exp\Big(-\frac{1}{r - 1 +\exp\big(\frac{1}{r-1}\big)}\Big)<1.
\end{equation}
We will show the result with the constant $K'=(1+C)/2$. For the rest of this proof, we will focus on the case $q\in [1.59d+\beta,2.01d]$, for some large constant $\beta>0$ (when $q>2.01d$ the desired bound follows rather crudely, see Footnote~\ref{fn:q2d} below for details).

Recall that  $v_1,\hdots, v_{d}$ are the children of $\rho$ in $T$  and  $D$ is the set of (indices of the) non-frozen children of the root $\rho$.  Let $Q\subseteq [q]$ be the set of all colors that are available for $\rho$ under $\eta$; since at most $d-|D|$ colors can be blocked for $\rho$, we have that
\begin{equation}\label{eq:QqdDaq}
|Q|\geq q-(d-|D|).
\end{equation} 
For $i\in [d]$, let $T_i=(V_i,E_i)$ be the subtree of $T$ rooted at $v_i$ which consists of all descendants of $v_i$ in $T$ (together with $v_i$ itself). Further, for a color $j\in [q]$, recall that
\begin{equation}
\begin{aligned}
\pi_{i,j}&=\mu_{T_i}\big(\sigma_{v_i}=j\mid \sigma_{\Lambda\cap V_i}=\eta_{\Lambda\cap V_i}\big),\\
\pi_{i,j}'&=\mu_{T_i}\big(\sigma_{v_i}=j\mid \sigma_{\Lambda\cap V_i}=\eta_{\Lambda\cap V_i}'\big), 
\end{aligned}
\end{equation}
i.e., $\pi_{i,j}$ is the marginal probability that $v_i$ takes the color $j$ at $v_i$ in $\mu_{T_i}$ with boundary condition $\eta_{\Lambda\cap V_i}$. For a non-frozen child $v_i$ (i.e., $i\in D$), note that, if color $j$ is available for $v_i$ (in $T_i$), then we have from Lemma~\ref{lemma:LowerBound4Marg} the bounds
\begin{equation}\label{eq:rf4ff4ftghhty}
L\leq \pi_{i,j},\pi_{i,j}, \mbox{ where } L=\frac{\big(1-\frac{1}{q-d}\big)^{d}}{d+(q-d)\big(1-\frac{1}{q-d}\big)^{d}}.
\end{equation}
Another useful bound to observe for later is that 
\[d L<1/3 \mbox{ for all $d\geq 2$.}\]

Consider arbitrary $k\in Q$. For $t\in[0,1]$, let $\zb(t)$ be the vector $\big(t\pib_1+(1-t)\pib_1',\hdots, t\pib_d+(1-t)\pib_d'\big)$. Using the tree recursion \eqref{def:fc} and ignoring summands that are 0 or factors that are equal to 1, we obtain
\begin{equation}\label{eq:v3tvtvtedew12}
\hat{\pi}_k(t)=\frac{\prod_{i\in D} (1-z_{i,k}(t))}{\sum_{j\in Q}\prod_{i\in D} (1-z_{i,j}(t))}.
\end{equation}
Recall,  our goal is to show that $\frac{1}{2}\hat{\pi}_k(t)(1+\gamma_k)<K'/|D|$ for all $t\in [0,1]$, where $\gamma_k\in [0,1]$ is the fraction of non-frozen children that have color $k$ available. \footnote{\label{fn:q2d}For $q>2.01d$, we have from \eqref{eq:v3tvtvtedew12} and \eqref{eq:QqdDaq} that $\hat{\pi}_k(t)\leq \frac{1}{|Q|-|D|}\leq \frac{1}{q-d}<\frac{1}{1.01d}\leq K'/|D|$, yielding the desired inequality.}Note that,  if color $j$ is available for the child $v_i$, \eqref{eq:rf4ff4ftghhty} gives that
\begin{equation*}
L\leq z_{i,j}(t)\mbox{ for } t\in[0,1],
\end{equation*}
so, using the fact that the color $k$ is available for $|D|\gamma_k$ non-frozen children, we obtain that the numerator of \eqref{eq:v3tvtvtedew12} is bounded by
\begin{equation}\label{eq:tg4tgvrf3fgty66a}
\prod_{i\in D} \big(1-z_{i,k}(t)\big)\leq (1-L)^{|D|\gamma_k}\leq \exp(-L|D|\gamma_k),
\end{equation}
whereas the denominator, using the AM-GM inequality analogously to \cite[Lemma 2.1 \& Corollary 2.2]{DF}, by
\begin{equation}\label{eq:tg4tgvrf3fgty66b}
\begin{aligned}
\sum_{j\in Q}\prod_{i\in D} \big(1-z_{i,j}(t)\big)&=\sum_{j\in [q]}\prod_{i\in D} \big(1-z_{i,j}(t)\big)-\sum_{j\in [q]\backslash Q}\prod_{i\in D} \big(1-z_{i,j}(t)\big)\\
&\geq  \big(q \exp(-|D|/q)-\tau) -(d-|D|),
\end{aligned}
\end{equation}
where $\tau>0$ is an absolute constant (independent of $q,d,\beta$). From \eqref{eq:v3tvtvtedew12}, \eqref{eq:tg4tgvrf3fgty66a}, and  \eqref{eq:tg4tgvrf3fgty66b}, it follows that $\hat{\pi}_k(t)\leq \frac{\exp(-L|D|\gamma_k)}{q \exp(-|D|/q) -(d-|D|)-\tau}$. Therefore, the lemma will follow by showing that 
\begin{equation}\label{eq:rv4g4ygh112}
\frac{|D|\exp(-|D|L\gamma_k)}{q \exp(-|D|/q) -(d-|D|)-\tau}(1+\gamma_k)<2K'.
\end{equation}
Note that the function $h(x)=(1+x) \exp(-d L x)$ is increasing when $x\in [0,1]$, since
\[h'(x)=\exp(-d Lx)\big(1-d L(1+x)\big)\geq \exp(-d Lx)(1-2dL)>0.\]
Therefore, to prove \eqref{eq:rv4g4ygh112}, it suffices to show that
\begin{equation}\label{eq:rv4g4ygh112av}
\frac{|D|\exp(-|D|L)}{q \exp(-|D|/q) -(d-|D|)-\tau}<K', \mbox{ or equivalently that } f(|D|)>0
\end{equation}
where $f(x):=K'\big(q \exp(-x/q) -d+x-\tau\big)-x\exp(-Lx)$ for $x\in [0,d]$. We claim that $f(x)$ is decreasing in $x$. We have
\[f'(x)=K'-K'\exp(-x/q)-\exp(-Lx)(1-Lx)\]
which is maximised for $x=d$. In particular,
\begin{align*}
f'(x)&\leq f'(d)=K'-K'\exp(-d/q)-\exp(-dL)(1-dL)\\
&\leq K'-K'\exp(-1/r)-\exp(-1/3)(1-1/3)\leq 0,
\end{align*}
where the second to last inequality follows from the fact that $dL<1/3$ and the last inequality using that $K'<1$. For $|D|=d$, \eqref{eq:rv4g4ygh112av} becomes
\begin{equation}\label{eq:rv4g4ygh112av124}
\frac{d\exp(-dL)}{q \exp(-d/q)-\tau}<K'.
\end{equation}

Now, we have that
\[d L\geq\frac{1}{r - 1 +\exp\big(\frac{d}{(r-1)d-1}\big)}.\]
Therefore, by choosing $\beta$ large enough and using that $q\in [1.59d+\beta,2.01d]$, we can ensure that
\[\frac{d\exp(-dL)}{q \exp(-d/q)-\tau}<\frac{1+C}{2}=K',\]
where $C$ is the constant in \eqref{eq:Cdefdef}. This proves \eqref{eq:rv4g4ygh112av124} and therefore concludes the proof of Lemma~\ref{lem:mainlemma3}.
\end{proof}

\section{Proof of Theorem~\ref{thm:main}}\label{sec:proofofmaintheorem}
Finally, utilizing Theorem~\ref{thm:main2}, we give the proof of Theorem~\ref{thm:main}. 
\begin{proof}[Proof of Theorem~\ref{thm:main}]
From Theorem~\ref{thm:main2}, we know that there exist constants $\beta>0$ and $U\in(0,1)$ such that for all $q\geq 1.59d+\beta$ the conclusion of Theorem~\ref{thm:main2} applies.   Note that Theorem~\ref{thm:main2} applies to the $d$-ary tree rather than the $(d+1)$-regular tree but these trees  differ only at the degree of the root.  To account for it, we will assume that $q\geq 1.59(d+1)+\beta$, i.e., prove Theorem~\ref{thm:main} with constant $\beta'=\beta+1.59$. Consider the function $\zeta$ given by $\zeta(\ell)=2U^{\ell-2}$ for $\ell\geq 0$ and note that $\zeta$ is exponentially decaying. We will show that the $q$-coloring model  has strong spatial mixing on the $(d+1)$-regular tree with decay rate $\zeta$. 

We first show by induction on $h$ that, for the tree $T=\hat{\mathbb{T}}_{d+1,h, \rho}$ (that is, the $(d+1)$-ary tree with height $h$ rooted at $\rho$),  for any subset $\Lambda$ of vertices of $T$ and arbitrary extendible assignments $\eta,\eta':\Lambda\rightarrow [q]$ of $T$,  it holds that 
\begin{equation}\label{eq:trb35b65}
\big\| \pib_{T,\rho, \eta}-\pib_{T,\rho, \eta'}\big\|_2^2\leq \zeta(\mathrm{dist}(\rho,\Delta)),
\end{equation}
where $\Delta\subseteq \Lambda$ is the set of vertices where $\eta$ and $\eta'$ disagree. The base cases $h=0,1,2$ are trivial so assume $h\geq 3$ in what follows. Let $\ell=\mathrm{dist}(\rho,\Delta)$. Once again,  \eqref{eq:trb35b65} is trivial when $\ell\leq 2$, so assume $\ell\geq 3$ in what follows.  Let $v_1,\hdots, v_{d+1}$ be the children of $\rho$ and, for $i\in [d+1]$, let $T_i=(V_i,E_i)$ be the subtree of $T$ rooted at $v_i$ which consists of all descendants of $v_i$ in $T$. Further, let  $\pib_i=\pib_{T_i,v_i, \eta(\Lambda\cap V_i)}$, $\pib_i'=\pib_{T_i, v_i, \eta'(\Lambda\cap V_i)}$. Then, by Theorem~\ref{thm:main2} and since $q\geq 1.59(d+1)+\beta$, we have that
\begin{equation}\label{eq:tv3t5t}
\big\| \pib_{T,\rho, \eta}-\pib_{T,\rho, \eta'}\big\|_2^2\leq U\max_{i\in[d+1]}\big\| \pib_{i}-\pib_{i'}\big\|_2^2.
\end{equation} 
For $i\in [d+1]$, since $T_i$ is isomorphic to $\hat{\mathbb{T}}_{d+1,h-1,\rho}$ we have by the induction hypothesis that 
\[\big\| \pib_{i}-\pib_{i'}\big\|_2^2\leq \zeta(\ell-1).\]
Combining this with \eqref{eq:tv3t5t} and the fact that $\zeta(\ell)=U \zeta(\ell-1)$ yields \eqref{eq:trb35b65}, completing the induction and therefore that strong spatial mixing holds on $T$ with decay rate $\zeta$.

Now, let $T=(V,E)$ be a finite subtree of the $(d+1)$-regular tree, $v$ be an arbitrary vertex of $T$, $\Lambda$ be a subset  of vertices of $T$ and $\eta,\eta':\Lambda\rightarrow [q]$ be arbitrary extendible assignments of $T$. Then, we can view $T$ as a subgraph of $T_v=\hat{\mathbb{T}}_{d+1,h, v}$ for some appropriate height $h$. It also holds that (see, for example, \cite[Lemma 25]{Efthymiou})
\[\| \pib_{T,v, \eta}-\pib_{T,v, \eta'}\big\|_2=\| \pib_{T_v,v, \eta}-\pib_{T_v,v, \eta'}\big\|_2.\]
Therefore,  from \eqref{eq:trb35b65} (applied to the tree $T_v$) we obtain that
\[\big\| \pib_{T,v, \eta}-\pib_{T,v, \eta'}\big\|_2^2\leq \zeta(\mathrm{dist}(v,\Delta)),\]
where $\Delta\subseteq \Lambda$ is the set of vertices where $\eta$ and $\eta'$ disagree.  

This completes the proof of Theorem~\ref{thm:main}.
\end{proof}

\bibliographystyle{plain}
\bibliography{bibliography}

\end{document}